\def\ps@IEEEtitlepagestyle{%
  \def\@oddfoot{\mycopyrightnotice}%
  \def\@evenfoot{}%
}
\def\mycopyrightnotice{%
  {\footnotesize This work has been submitted to the IEEE for possible publication. Copyright may be transferred without notice, after which this version may no longer be accessible.\hfill}% <--- Change here
  \gdef\mycopyrightnotice{}% just in case
}
\newtheorem{theorem}{Theorem}
\newtheorem{assumption}{Assumption}
\newtheorem{lemma}{Lemma}
\newtheorem{remark}{Remark}
\newtheorem{envdef}{Definition}
\newcommand{\inlnorm}[1]{\lVert#1\rVert}
\newcommand{\norm}[1]{\lVert#1\rVert}
\newcommand{\zer}[1]{\text{Zer}(#1)}
\newcommand{\fix}[1]{\text{Fix}(#1)}
\newcommand{\blkd}[1]{\text{blkd}(#1)}
\newcommand{\expt}[2]{\mathbb{E}_{#1}[#2]}
\newcommand{\bexpt}[2]{\mathbb{E}_{#1}\big[#2\big]}
\DeclarePairedDelimiter\ceil{\lceil}{\rceil} % \usepackage{mathtools}
\newcommand\ubar[1]{\stackunder[1.2pt]{$#1$}{\rule{.8ex}{.075ex}}}
\newcommand{\gra}[1]{\text{gra}(#1)}
\DeclareMathOperator{\dom}{dom}
\newcommand{\supsub}[3]{{#1}^{#2}_{#3}}
\newcommand{\sumssub}[1]{\sum_{\scriptscriptstyle #1}}
\DeclareMathOperator{\minimize}{minimize}
\DeclareMathOperator{\maximize}{maximize}
\DeclareMathOperator{\subj}{subject\:to}
\newcommand{\playerN}{\mathcal{N}}
\newcommand{\edgeE}{\mathcal{E}}
\newcommand{\by}{\boldsymbol y}
\newcommand{\bz}{\boldsymbol z}
\newcommand{\bone}{\boldsymbol 1}
\newcommand{\bzero}{\boldsymbol 0}
\newcommand{\btau}{\boldsymbol \tau}
\newcommand{\idty}{\text{Id}}
\newcommand{\rset}[2]{\mathbb{R}^{#1}_{#2}}
\newcommand{\pdset}[2]{\mathbb{S}^{#1}_{#2}}
\newcommand{\crset}[2]{\overline{\mathbb{R}}^{#1}_{#2}}
\newcommand{\nset}[2]{\mathbb{N}^{#1}_{#2}}
\newcommand{\neighbN}[2]{\mathcal{N}^{#1}_{#2}}
\DeclareMathOperator{\argmin}{argmin}
\DeclareMathOperator{\argmax}{argmax}
\DeclareMathOperator{\poptA}{\bar{\mathbb{A}}}
\DeclareMathOperator{\appoptA}{\bar{\mathcal{A}}}
\DeclareMathOperator{\poptB}{\bar{\mathcal{B}}}
\newcommand{\bmu}{\boldsymbol \mu}
\newcommand{\blambda}{\boldsymbol \lambda}
\newcommand{\optAy}{\mathbb{A}_y}
\newcommand{\diagA}{\Lambda}
\newcommand{\pspace}{\mathcal{K}}
\DeclareMathOperator{\proj}{Pj}
\newcommand{\res}[1]{\text{res}(#1)}
\newcommand{\rrcomp}{\mathscr R}
\DeclareMathOperator{\optT}{\mathbb{T}}
\DeclareMathOperator{\optA}{\mathbb{A}}
\DeclareMathOperator{\optB}{\mathcal{B}}
\DeclareMathOperator{\gjacob}{\mathbb{F}}
\DeclareMathOperator{\extgjacob}{\tilde{\mathbb{F}}}
\newcommand{\Tblue}[1]{\textcolor{black}{#1}}
\def\BibTeX{{\rm B\kern-.05em{\sc i\kern-.025em b}\kern-.08em
    T\kern-.1667em\lower.7ex\hbox{E}\kern-.125emX}}
\begin{document}

\title{Distributed Computation of Stochastic GNE with Partial Information: An Augmented Best-Response Approach}
\author{Yuanhanqing Huang and Jianghai Hu% <-this % stops a space
\thanks{This work was supported by the National Science Foundation under Grant No. 2014816. A conference version has been submitted to ACC 2022 \cite{huang2021sgnep}.}
\thanks{The authors are with the Elmore Family School of Electrical and Computer Engineering, Purdue University, West Lafayette, IN, 47907, USA (e-mail: huan1282@purdue.edu; jianghai@purdue.edu).}
}

\maketitle

\begin{abstract}
In this paper, we focus on the stochastic generalized Nash equilibrium problem (SGNEP) which is an important and widely-used model in many different fields. 
In this model, subject to certain global resource constraints, a set of self-interested players aim to optimize their local objectives that depend on their own decisions and the decisions of others and are influenced by some random factors. 
We propose a distributed stochastic generalized Nash equilibrium seeking algorithm in a partial-decision information setting based on the Douglas-Rachford operator splitting scheme, \Tblue{which relaxes assumptions in the existing literature}.
The proposed algorithm updates players' local decisions through augmented best-response schemes and subsequent projections onto the local feasible sets, which occupy most of the computational workload. 
The projected stochastic subgradient method is applied to provide approximate solutions to the augmented best-response subproblems for each player. 
The Robbins-Siegmund theorem is leveraged to establish the main convergence results to a true Nash equilibrium using the proposed inexact solver. 
Finally, we illustrate the validity of the proposed algorithm \Tblue{via} two numerical examples, i.e., a stochastic Nash-Cournot distribution game and a multi-product assembly problem with the two-stage model.
\end{abstract}

\begin{IEEEkeywords}
Generalized Nash equilibrium, stochastic optimization, variational inequality, distributed algorithm, operator splitting
\end{IEEEkeywords}

\section{Introduction}

% description and application of SGNEP

In the Nash equilibrium problem (NEP), a set of self-interested players aim to optimize their individual payoffs which depend not only on their own decisions but also on the decisions of others \cite{nash1950equilibrium}. 
The generalized Nash equilibrium problem (GNEP) extends the NEP by considering additional global resource constraints that these players should collectively satisfy \cite{facchinei2007generalized,facchinei2010generalized}. 
In numerous practical applications, such as strategic behaviors in power markets \cite{kannan2011strategic,kannan2013addressing}, engagement of multiple humanitarian organizations in disaster relief \cite{nagurney2020stochastic}, and the traffic assignment of strategic risk-averse users \cite{nikolova2014mean}, in addition to couplings in objectives and global constraints, there is also uncertainty involved in the objective functions. 
For example, in the target-rate power management problem for wireless networks, the power of battery-driven devices should be regulated in a real-time manner in the presence of inherent stochastic fluctuations of the underlying network \cite{zhou2021robust}.
These applications motivate us to consider an extension to the GNEP, called the stochastic generalized Nash equilibrium problem (SGNEP). 
In the SGNEP, instead of deterministic objective functions, players optimize the expected values of uncertain objective functions which are dependent on some random variables. 
Although the SGNEP can capture a wider range of applications, computing its Nash equilibria becomes a much more challenging problem compared to the GNEP, due to the absence of closed-form expressions of the objective functions. 
Fortunately, as has been shown in \cite[Sec.~1.4]{facchinei2007finite}, many SGNEPs can be formulated as stochastic variational inequalities (SVIs) and solved by leveraging existing results from this field, e.g. \cite{koshal2012regularized, bot2020mini, yousefian2017smoothing, cui2016analysis, kannan2019optimal}. 

% motivation of distributed type of computation and partial information
% Moreover, it is often desirable to allow distributed computation in the context of non-cooperative games on networks. 
Our aim in this paper is to develop a distributed algorithm under the partial-decision information setting for solving SGNEPs over a network of players. 
In the context of non-cooperative games on networks, participants are self-interested and make rational decisions that maximize their own payoffs. 
It is unrealistic that these participants will adopt any centralized methods that require cooperation among them. 
Because of this, there is an enduring research interest in distributing the computation of Nash equilibria \cite{salehisadaghiani2016distributed, parise2020distributed}, especially through the avenue of operator splitting technique \cite{yi2019operator, yi2018distributed}. 
In addition to the distributed computation, under most circumstances, participants can only have access to local information and decisions of their neighbors, which constitutes a partial-decision information setting \cite{pavel2019distributed, bianchi2022fast, belgioioso2020distributed}. 
One reason is that these players are reluctant to send their local information and decisions to the general public out of privacy concerns while being willing to share these with their trusted neighbors on the whole network. 
Although there may exist some central node that collects and distributes the necessary information to each player, this central node is subject to performance limitations, such as single point of failure, and limited flexibility and scalability \cite{yang2019survey}. 
The limited capacity of communication channels also constraints information sharing over the network. 

% literature review about SGNEP

Significant efforts have been devoted to designing algorithms to solve SGNEPs distributedly under the full-decision information setting where each player has access to all other players' decisions. 
The authors of \cite{koshal2012regularized} consider a Cartesian stochastic variational inequality problem with a monotone map. 
They propose a stochastic iterative Tikhonov regularization method and a stochastic iterative proximal-point method to solve it, which let players update their regularization parameters and centering term properly after each iteration. 
The authors of \cite{franci2020distributed} propose a solution based on the preconditioned forward-backward (FB) operator splitting with the expected-value pseudogradient assumed to be restricted co-coercive and approximated via the stochastic approximation (SA) scheme. 
To accelerate game dynamics and relax the co-coercivity assumption, \cite{cui2021relaxed} adopts a forward-backward-forward framework. 
These works are mostly gradient-based which enjoy low complexity in implementation and updating per player step. 
Nevertheless, rational players would take best-response actions given others' strategies and deviate from gradient-based schemes unless forced by some external authority. 
The work in \cite{lei2020synchronous} provides an inexact generalization of the proximal best-response (BR) schemes to the SNEP whose corresponding proximal BR map admits a contractive property.  
\Tblue{The authors of \cite{lei2018distributed} further consider SNEPs with composite objectives and design a variable sample-size proximal BR scheme, under a contractive property on the proximal BR map.}
Yet, far less has been studied when it comes to the distributed solution to SGNEPs with merely partial information. 
The only existing work to our best knowledge is \cite{franci2020stochastic}, which also relies on the FB framework along with the SA method. 
The convergence of the proposed algorithms has been analyzed under the assumption that the preconditioned forward operator is restricted co-coercive, which only allows comparatively small step sizes. 

% contribution of this paper
Our contributions can be summarized in the following aspects. First, we propose a distributed algorithm to solve the SGNEP under the partial-decision information setting based on the Douglas-Rachford splitting and the proximal mapping. 
In the proposed algorithm, the involved players are asked to update their decision vectors in two separate steps: solving the augmented best-response subproblems, and projecting onto the local feasible sets after some linear transformations. 
The updates of their local estimates and dual variables only require some trivial linear transformations. 
This algorithm can deal with cases where the scenario-based objectives of players are nonsmooth, \Tblue{and relaxes some commonly-made assumptions such as the $\alpha-$cocoercivity with $\alpha>1/2$ in the FB splitting}. 
Second, we establish the convergence of the proposed algorithm under assumptions concerning the properties of the pseudogradient, the extended pseudogradient, and the stochastic subgradients of the objectives. 
Without relying on the contractive property, the proof in this paper is based on the Robbins-Siegmund theorem and extends the convergence results discussed in \cite{lei2020synchronous}. 
Drawing tools and techniques from stochastic approximation and convex analysis, we then construct a feasible inexact solver for the augmented best-response subproblems based on the projected stochastic subgradient method and discuss the prescribed accuracy within which the inexact solver should achieve such that the algorithm convergence is ensured. 
The proposed inexact solver based on the projected stochastic subgradient method requires the projection onto some bounded box sets rather than the (potentially complicated) local feasible sets of the original problem, which considerably improves the computational efficiency. 

% paper structure
The remainder of this paper is organized as follows. 
In Sec.~\ref{sect:prob-form}, we formally formulate the SGNEP on networks and provide some basic definitions as well as assumptions. 
% A major theoretical result in this section is that we recast the SGNEP as the zero-finding problem of a certain operator and prove that the solution of the latter is a "variational" solution of the former. 
\Tblue{In this section, we recast the SGNEP as the zero-finding problem of a certain operator and prove that the solution of the latter is a "variational" solution of the former. }
In Sec.~\ref{sect:dr-algm}, a distributed algorithm in a partial-decision information setting is proposed. 
Sec.~\ref{sect:convg} focuses on the convergence analysis of the proposed algorithm. 
In this section, we also construct an inexact solver based on the projected stochastic subgradient method. 
In Sec.~\ref{sect:simu}, to demonstrate the theoretical findings and the proposed algorithm in practical applications, we include two numerical examples: a stochastic Nash-Cournot distribution game and a multi-product assembly problem with the two-stage model.  
Sec.~\ref{sect:conclu} concludes the paper and highlights potential extensions and applications. 

% notation
\textit{Basic Notations:} For a set of matrices $\{V_i\}_{i \in S}$, we let $\blkd{V_1, \ldots, V_{|S|}}$ or $\blkd{V_i}_{i \in S}$ denote the diagonal concatenation of these matrices, $[V_1, \ldots, V_{|S|}]$ their horizontal stack, and $[V_1; \cdots; V_{|S|}]$ their vertical stack. 
For a set of vectors $\{v_i\}_{i \in S}$, $[v_i]_{i \in S}$ or $[v_1; \cdots; v_{|S|}]$ denotes their vertical stack. 
For a matrix $V$ and a pair of positive integers $(i, j)$, $[V]_{(i,j)}$ denotes the entry on the $i$-th row and the $j$-th column of $V$. 
For a vector $v$ and a positive integer $i$, $[v]_i$ denotes the $i$-th entry of $v$. 
Denote $\crset{}{} \coloneqq \rset{}{} \cup \{+\infty\}$, $\rset{}{+} \coloneqq [0, +\infty)$, and $\rset{}{++} \coloneqq (0, +\infty)$. 
$\pdset{n}{+}$ (resp. $S^n_{++}$) represents the set of all $n\times n$ symmetric positive semi-definite (resp. definite) matrices.
$\iota_{\mathcal{S}}(x)$ is defined to be the indicator function of a set $\mathcal{S}$, i.e., if $x \in \mathcal{S}$, then $\iota_{\mathcal{S}}(x) = 0$; otherwise, $\iota_{\mathcal{S}}(x) = +\infty$. 
$N_{S}(x)$ denotes the normal cone to the set $S \subseteq \rset{n}{}$ at the point $x$: if $x \in S$, then $N_S(x) \coloneqq \{u \in \rset{n}{} \mid \sup_{z \in S} \langle u, z-x \rangle \leq 0 \}$; otherwise, $N_S(x) \coloneqq \varnothing$. 
If $S \in \rset{n}{}$ is a closed and convex set, the map $\proj_S:\rset{n}{} \to S$ denotes the projection onto $S$, i.e., $\proj_S(x) \coloneqq \argmin_{v \in S} \norm{v - x}_2$. 
We use $\rightrightarrows$ to indicate a point-to-set map. 
For an operator $T: \rset{n}{} \rightrightarrows \rset{n}{}$, $\zer{T} \coloneqq \{x \in \rset{n}{} \mid Tx \ni \bzero\}$ and $\fix{T} \coloneqq \{x \in \rset{n}{} \mid Tx \ni x\}$ denote its zero set and fixed point set, respectively. 
We denote $\dom(T)$ the domain of the operator $T$ and $\gra{T}$ the graph of it.
The resolvent and reflected resolvent of $T$ are defined as $J_T \coloneqq (I + T)^{-1}$ and $R_T \coloneqq 2J_T - I$, respectively.

\section{Problem Formulation}\label{sect:prob-form}

\subsection{Stochastic Game Formulation and SGNE}

In this section, we formulate the stochastic generalized Nash equilibrium problem (SGNEP) on networks. There are $N$ players participating in the game, indexed by $\playerN \coloneqq \{1, \ldots, N\}$. 
Each player $i \in \playerN$ needs to determine its local decision vector $x_i \in \mathcal{X}_i$ to optimize its objective, where $\mathcal{X}_i \subseteq \rset{n_i}{}$ is the local feasible set/action space of player $i$. 
This Nash equilibrium seeking problem is generalized because, besides the local constraints $\{\mathcal{X}_i\}_{i \in \playerN}$, the decision vectors of all the players should satisfy some global resource constraints, i.e., $\sum_{i \in \playerN}A_ix_i \leq c$. 
Here, we have the matrix $A_i \in \rset{m \times n_i}{}$ with $m$ denoting the number of the (global) affine coupling constraints, and the constant vector $c \in \rset{m}{}$ representing the quantities of available resources. 
Altogether, for each player $i$, the feasible set of the decision vector $x_i$ is given by 
\begin{equation}
    \tilde{\mathcal{X}}_i(x_{-i}) \coloneqq \mathcal{X}_i \cap \{x_i \in \rset{n_i}{} \mid A_ix_i + {\textstyle\sum}_{j \in \playerN_{-i}}A_jx_j \leq c\},
\end{equation}
where $\playerN_{-i} \coloneqq \playerN \backslash \{i\}$, and $x_{-i}$ denotes the stack of decision vectors except that of player $i$. 
Accordingly, the feasible set of the collective decision vector $x \coloneqq [x_1; \cdots; x_N]$ is given by 
\begin{equation}
    \tilde{\mathcal X} \coloneqq \mathcal X \cap \{x \in \rset{n}{} \mid Ax - c \leq \bzero\},
\end{equation}
where $\mathcal{X} \coloneqq \prod_{i \in \playerN} \mathcal{X}_i$, $n \coloneqq \sum_{i \in \playerN} n_i$, and $A \coloneqq [A_1, A_2, \ldots, A_N]$.

To capture uncertainty in practical settings, we consider stochastic games where the objective function $\mathbb{J}_i(x_i; x_{-i})$ of each player $i$ is the expected value of certain function $J_i$.
Formally, given the decisions $x_{-i}$ of the other players, each player $i$ aims to solve the following local problem:
\begin{equation}\label{eq:pf-optprob}
\begin{cases}
\minimize_{x_i \in \mathcal{X}_i} \mathbb{J}_i(x_i; x_{-i}) = \expt{\xi_i}{J_i(x_i; x_{-i}, \xi_{i})} \\
\subj \qquad A_ix_i \leq c - \sum_{j \in \playerN_{-i}} A_jx_j
\end{cases},
\end{equation}
where $\xi_i: \Omega_i \to \rset{n_{\xi_i}}{}$ is a random variable in a well-defined probability space.

Given the above formulation of the SGNEP, we have the following standing assumptions that hold throughout the paper. 
\begin{assumption}\label{asp:subgrad}
(Scenario-Based Objectives) For each player $i \in \playerN$, given any fixed sample $\omega_i \in \Omega_i$, the scenario-based objective $J_i(\cdot; \cdot, \xi_i(\omega_i))$ is proper and continuous. 
In addition, $J_i(x_i; x_{-i}, \xi_i(\omega_i))$ is a convex function w.r.t. $x_i$ given any fixed $x_{-i}$ and $\omega_i \in \Omega_i$.
\end{assumption}

\begin{assumption}\label{asp:fesb-set}
(Feasible Sets) Each local feasible set $\mathcal{X}_i$ is nonempty, compact, and convex. 
The collective feasible set $\tilde{\mathcal{X}}$ is nonempty, and the Mangasarian-Fromovitz constraint qualification (MFCQ) holds \cite[Ch~3.2]{facchinei2007finite}\cite[Sec.~16.2.3]{palomar2010convex}. 
\end{assumption}

Collectively solving the problems in \eqref{eq:pf-optprob} gives rise to the stochastic generalized Nash equilibrium (SGNE), the formal definition of which is given as follows \cite{franci2020distributed, ravat2011characterization}:
\begin{envdef}\label{def:sgne}
The collective decision $x^* \in \tilde{\mathcal{X}}$ is a stochastic generalized Nash equilibrium (SGNE) if no player can benefit by unilaterally deviating from $x^*$. Specifically, for all $i \in \playerN$, $\mathbb{J}_i(x^*_i; x^*_{-i}) \leq \mathbb{J}_i(x_i; x^*_{-i})$ for any $x_i \in \tilde{X}_i(x^*_{-i})$.
\end{envdef}

% However, Definition \ref{def:sgne} gives relatively weak implications when it comes to the solutions of SGNEPs. 

Under Assumption~\ref{asp:subgrad}, the SGNE seeking problems can be transformed to the corresponding generalized quasi-variational inequality (GQVI) \cite[Sec.~12.2]{palomar2010convex}.
As shown in \cite[Prop.~12.3]{palomar2010convex}, Definition~\ref{def:sgne} of SGNE coincides with the following definition from the perspective of GQVI: 
\begin{envdef}\label{def:sgne2}
The collective decision $x^* \in \tilde{\mathcal{X}}$ is a stochastic generalized Nash equilibrium (SGNE) if $x^*$ along with a suitable $g^* \in \prod_{i \in \playerN}\partial_{x_i}\mathbb{J}_i(x^*_i; x^*_{-i})$ is a solution of the problem  $\text{GQVI}(\prod_{i \in \playerN}\tilde{\mathcal{X}}_i, \prod_{i \in \playerN}\partial_{x_i}\mathbb{J}_i)$, i.e., 
\begin{equation}
\begin{split}
(x - x^*)^Tg^* \geq 0, \forall x \in {\textstyle\prod}_{i \in \playerN}\tilde{\mathcal{X}}_i(x^*_{-i}).
\end{split}
\end{equation}
\end{envdef}

As suggested in \cite[Sec.~12.2.3]{palomar2010convex}, under Assumptions~\ref{asp:subgrad} and \ref{asp:fesb-set}, we can equivalently recast the problem in \eqref{eq:pf-optprob} into a set of inclusions by considering the Karush-Kuhn-Tucker (KKT) conditions of the above GQVI such that $\forall i \in \playerN$:
\begin{equation}\label{def:kkt-gqvi}
\begin{split}
    & \bzero \in \partial_{x_i}\mathbb{J}_i(x^*_i; x^*_{-i}) + A_i^T\lambda_i + N_{\mathcal{X}_i}(x^*_i) \\
    & \bzero \in -(Ax^* - c) + N_{\rset{m}{+}}(\lambda_i),
\end{split}
\end{equation}
where $\lambda_i$ is the Lagrangian multiplier for the global resource constraints $A_ix_i \leq c - \sum_{j \in \neighbN{}{-i}}A_jx_j$ for each player $i$. 

In this paper, we restrict our attention to a subset of these SGNEs where the players share the same coupled constraints, and hence all the Lagrangian multipliers are in consensus, i.e., $\lambda_1 = \ldots = \lambda_N$. 
This gives rise to a generalized variational inequality (GVI) problem. 
This subclass of the SGNEs, known as the variational stochastic generalized Nash equilibria (v-SGNEs) \cite{facchinei2010generalized, facchinei2007generalized}, enforces the idea of economic fairness and enjoys better social stability/sensitivity \cite{kulkarni2012variational}. 
We will focus on this subclass since we can leverage a variety of tools that have been developed for solving (G)VIs \cite[Ch.~10-12]{facchinei2007finite} and design the modified best-response dynamics based on it. 
\begin{envdef}\label{def:v-sgne}
The collective decision $x^* \in \tilde{\mathcal{X}}$ is a variational stochastic generalized Nash equilibrium (v-SGNE) if $x^*$ along with a suitable $g^* \in \prod_{i \in \playerN}\partial_{x_i}\mathbb{J}_i(x^*_i; x^*_{-i})$ is a solution of the $\text{GVI}(\tilde{\mathcal X}, \prod_{i \in \playerN}\partial_{x_i}\mathbb{J}_i)$, i.e., 
\begin{equation}\label{eq:v-sgne}
\begin{split}
   (x - x^*)^Tg^* \geq 0, \forall x \in \tilde{\mathcal X}. 
\end{split}
\end{equation}
\end{envdef}
Similarly, the KKT system of the above GVI is given by:
\begin{equation}\label{eq:kkt-gvi}
\begin{split}
    & \bzero \in \partial_{x_i}\mathbb{J}_i(x^*_i; x^*_{-i}) + A_i^T\lambda + N_{\mathcal{X}_i}(x^*_i) \\
    & \bzero \in -(Ax^* - c) + N_{\rset{m}{+}}(\lambda),
\end{split}
\end{equation}
where $\lambda$ is the Lagrangian multiplier for the global constraints in \eqref{eq:pf-optprob}. 
Notice that the GVI in \eqref{eq:v-sgne} is not completely equivalent to the initial SGNEP in \eqref{eq:pf-optprob} as the game may admit an SGNE while the GVI has no solution. 
We make the following assumption concerning the existence of v-SGNEs. 
\begin{assumption}\label{asp:ne-exist}
(Existence of v-SGNE) The SGNEP considered admits a nonempty set of v-SGNEs. 
\end{assumption}

\begin{remark}
The existence and multiplicity of solutions of GNEPs with continuously differentiable objectives have been extensively studied, and the related theories can be found in \cite[Ch.~2\&3]{facchinei2007finite}. 
For the GNEPs with nonsmooth objectives, we can check the existence of v-GNEs of these GNEPs by \cite[Prop.~12.11]{palomar2010convex}. 
If the closed-form expressions of the objectives $\mathbb{J}_i(x_i; x_{-i})$ for any $i \in \playerN$ are unavailable and we cannot apply the above results,  \cite[Sec.~4]{ravat2011characterization} provides sufficient conditions to guarantee the existence of v-SGNEs based on the properties of scenario-based objectives. 
\end{remark}

\subsection{Network Game Formulation}

% Among the literature on learning in games (e.g. \cite{zhou2017mirror}), it is common to assume that there exists a "blackbox" feedback mechanism or an oracle which gather the decisions from all players and then return the corresponding feedback information (e.g. the payoff gradients) to each player. 
% In contrast, in network games, no player or oracle can observe the decision vectors of all players. 
In network games, there exists an underlying communication graph $\mathcal{G} = (\playerN_g, \edgeE_g)$, where players can communicate with their neighbors through arbitrators on the edges. 
The node set $\playerN_g$ denotes the set of all players, and $\edgeE_g \subseteq \playerN_g \times \playerN_g$ is the set of directed edges. The cardinalities $|\playerN_g|$ and $|\edgeE_g|$ are denoted by $N_g$ and $E_g$. In this case, $\playerN_g = \playerN$ and $N_g = N$. We use $(i, j)$ to denote a directed edge having node/player $i$ as its tail and node/player $j$ as its head. 
For notational brevity, let $\playerN_i$ denote the set of immediate neighbors of player $i$ who can directly communicate with it, $\neighbN{+}{i} \coloneqq \{j\in\mathcal{N}\mid(j, i)\in\edgeE_g\}$ the set of in-neighbors of player $i$, and $\neighbN{-}{i} \coloneqq \{j\in\mathcal{N}\mid(i, j)\in\edgeE_g \}$ the set of out-neighbors of player $i$. 
Note that although the multipliers we are going to introduce are defined in a directed fashion, we assume each node can send messages to both its in- and out-neighbors, and $\mathcal{G}$ should satisfy the following assumption. 
\begin{assumption}{(Communicability)}\label{asp:commtopo}
The underlying communication graph $\mathcal{G} = (\mathcal{N}_g, \edgeE_g)$ is undirected and connected. Besides, it has no self-loops. 
\end{assumption}

We next recast the SGNEP in \eqref{eq:pf-optprob} as the zero-finding problem of a certain operator that can be carried out distributedly over the communication graph $\mathcal{G}$ via the network Lagrangian of this game and refer the interested reader to \cite{huang2021distributed} for more details. 
Now for each player $i \in \playerN$, besides its local decision vector $y^i_i \in \mathcal{X}_i$, it keeps a local estimate $y^j_i \in \rset{n_j}{}$ of the player $j$'s decision for all $j \in \neighbN{}{-i}$, which together constitutes its augmented decision vector $y_i$. 
Here, we use $y^i_i$ to denote the local decision of each player $i$ to distinguish from the case where only local decision $x_i$ are maintained and considered. 
We denote $y^{-i}_i \coloneqq [y^j_i]_{j \in \neighbN{}{-i}}$ the vertical stack of $\{y^j_i\}_{j \in \neighbN{}{-i}}$ and $y_i \coloneqq [y^j_i]_{j \in \playerN}$ the vertical stack of $\{y^j_i\}_{j \in \playerN}$, both in prespecified orders. 
Denote $n_{<i} = \sum_{j\in\mathcal{N}, j<i} n_j$ and $n_{>i} = \sum_{j\in\mathcal{N}, j>i}n_j$. The extended feasible set of $\by \coloneqq [y_i]_{i \in \playerN}$ is defined as $\hat{\mathcal{X}} \coloneqq \hat{\mathcal{X}}_1 \times \hat{\mathcal{X}}_2 \times \cdots \times \hat{\mathcal{X}}_N$ with each one defined as $\hat{\mathcal{X}}_i \coloneqq \mathbb{R}^{n_{<i}} \times \mathcal{X}_i \times \mathbb{R}^{n_{>i}}$. 
For brevity, we shall write $\{y_i\}$ in replacement of the more cumbersome notation $\{y_i\}_{i \in \playerN}$ and similarly for other variables on nodes and edges (e.g. the dual variables $\{\mu_{ji}\}_{(j,i) \in \edgeE_g}$ to be introduced below will be denoted simply by $\{\mu_{ji}\}$), unless otherwise specified. 
In the reformulated zero-finding problem, we introduced a set of dual variables $\{\lambda_i\}$ to enforce the global resource constraints. 
Moreover, another two sets of dual variables $\{\mu_{ji}\}$ and $\{z_{ji}\}$ are introduced to guarantee the consensus of $\{y_i\}$ and $\{\lambda_i\}$. 
It is worth mentioning that $\{y_i\}$ and $\{\lambda_i\}$ are maintained by players while $\{\mu_{ji}\}$ and $\{z_{ji}\}$ are maintained by arbitrators on the edges. 

We next give a brief introduction to two commonly used operators in the distributed solution of GNEP: the pseudogradient $\gjacob: \rset{n}{} \rightrightarrows \rset{n}{}$ and the extended pseudogradient $\extgjacob: \rset{nN}{} \rightrightarrows \rset{n}{}$. The pseudogradient $\gjacob$ is the vertical stack of the partial subgradients of the objective functions of all players, which is given as follows:
\begin{equation}\label{eq:psd-jacob}
    \gjacob: x \mapsto [\partial_{x_i}\mathbb{J}_i(x_i; x_{-i})]_{i \in \playerN}.
\end{equation}
In contrast, the extended pseudogradient $\extgjacob$ defined in \eqref{eq:ext-psd-jacob} is a commonly used operator under the partial-decision information setting, where each player keeps the local estimates of others' decisions and then uses these estimates as the parametric inputs: 
\begin{equation}\label{eq:ext-psd-jacob}
    \extgjacob: [y_i]_{i \in \playerN} \mapsto [\partial_{y^i_i} \mathbb{J}_i(y^i_i; y^{-i}_i)]_{i \in \playerN}.
\end{equation}
% We then define two scenario-based operators $\hat{F}: \rset{n}{} \times \Omega \rightrightarrows \rset{n}{}$ and $\hat{\extgjacob}: \rset{nN}{} \times \Omega \rightrightarrows \rset{n}{}$ for the stochastic problem discussed as
% \begin{equation}
% \begin{split}
% & \hat{F}: (x, [\xi_i]_{i \in \playerN}) \mapsto [\partial_{x_i}J_i(x_i; x_{-i}, \xi_i)]_{i \in \playerN} \text{ and} \\
% & \hat{\extgjacob}: ([y_i]_{i \in \playerN}, [\xi_i]_{i \in \playerN}) \mapsto [\partial_{y^i_i}J_i(y^i_i; y^{-i}_i, \xi_i)]_{i \in \playerN} \text{, respectively.}
% \end{split}
% \end{equation}
To incorporate the extended pseudogradient $\extgjacob$ into a fixed-point iteration, we then introduce the individual selection matrices $\{\mathcal{R}_i\}_{i \in \playerN}$ and their diagonal concatenation $\mathcal{R} \in \rset{n \times nN}{}$:
\begin{equation}\label{eq:prjr}
\begin{split}
    & \mathcal{R}_i = [\bzero_{n_i\times n_{<i}}, \mathbf{I}_{n_i}, \bzero_{n_i\times n_{>i}}], \;
    \mathcal{R} = \blkd{\mathcal{R}_1, \ldots, \mathcal{R}_N}.
\end{split}
\end{equation}
Notice that $y^i_i = \mathcal{R}_iy_i$ and $\mathcal{R}_i\mathcal{R}_i^T = I_{n_i}$. 
Finally, the set-valued operator $\optT$ we are going to study is given below: 
\begin{equation}\label{eq:optT}
\small
    \optT: 
    \begin{bmatrix}\by\\ \blambda\\ \bmu\\ \bz\end{bmatrix} \mapsto 
    \begin{bmatrix}
    \mathcal{R}^T(\extgjacob(\by) + \diagA^T\blambda)+ B_n\bmu + \rho_\mu L_n\by + N_{\hat{\mathcal{X}}}(\by) \\
    N_{\rset{mN}{+}}(\blambda) - \diagA \mathcal{R} \by + \boldsymbol c + B_m\bz + \rho_z L_m\blambda \\
    -B_n^T\cdot\by \\
    -B_m^T\cdot\blambda
    \end{bmatrix},
\normalsize
\end{equation}
where $\diagA$ is the diagonal concatenation of $\{A_i\}_{i \in \playerN}$, i.e., $\diagA \coloneqq \blkd{A_1, \ldots, A_N}$; 
$\boldsymbol c$ is the vertical stack of $\{c_i\}_{i \in \playerN}$ with $\sum_{i \in \playerN}c_i = c$; 
$B_n \coloneqq (B \otimes I_n)$, $L_n \coloneqq (L \otimes I_n)$, $B_m \coloneqq (B \otimes I_m)$, $L_m \coloneqq (L \otimes I_m)$, $B$ and $L$ are the incidence matrix and Laplacian matrix of the underlying communication graph, respectively, with $L = B\cdot B^T$; 
and $\by$, $\blambda$, $\bmu$, and $\bz$ are the stack vectors of $\{y_i\}$, $\{\lambda_i\}$, $\{\mu_{ji}\}$, and $\{z_{ji}\}$, respectively; $\psi$ denotes the stack of the primal and dual variables, i.e., $\psi \coloneqq [\by; \blambda; \bmu; \bz]$.

\begin{theorem}\label{thm:zerokkt}
Suppose Assumptions \ref{asp:subgrad} to \ref{asp:commtopo} hold, and there exists $\psi^* \coloneqq [\by^*; \blambda^*; \bmu^*; \bz^*] \in \zer{\optT}$. Then $\by^* = \bone_N \otimes y^*$, $\blambda^* = \bone_N \otimes \lambda^*$, and $(y^*, \lambda^*)$ satisfies the KKT conditions \eqref{eq:kkt-gvi} for v-GNE with $x^*$ replaced with $y^*$. 
Conversely, for a solution $(y^\dagger, \lambda^\dagger)$ of the KKT problem in \eqref{eq:kkt-gvi}, there exist $\bmu^\dagger$ and $\bz^\dagger$ such that $\psi^\dagger \coloneqq [\bone_N \otimes y^\dagger; \bone_N \otimes\lambda^\dagger; \bmu^\dagger; \bz^\dagger] \in \zer{\optT}$.
\end{theorem}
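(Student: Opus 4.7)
The plan is to exploit the Kronecker/block structure of $\optT$ and handle both directions in parallel. For the forward direction, from $\psi^* \in \zer{\optT}$ the third and fourth component inclusions immediately give $B_n^T \by^* = \bzero$ and $B_m^T \blambda^* = \bzero$. Since $\commg$ is connected and undirected (Assumption~\ref{asp:commtopo}), $\ker(B^T) = \text{span}(\bone_N)$, so $\ker(B_n^T) = \text{span}(\bone_N) \otimes \rset{n}{}$ and analogously for $B_m^T$. Hence $\by^* = \bone_N \otimes y^*$ and $\blambda^* = \bone_N \otimes \lambda^*$ for some $y^* \in \rset{n}{}$ and $\lambda^* \in \rset{m}{}$. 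Consensus then forces two simplifications: $L_n \by^* = (L \bone_N) \otimes y^* = \bzero$ and likewise $L_m \blambda^* = \bzero$, so the Laplacian regularizers vanish; and $\extgjacob(\bone_N \otimes y^*) = \gjacob(y^*)$, since every local estimate $y_i^j$ coincides with $y_j^*$.

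To extract \eqref{eq:kkt-gvi}, I would pre-multiply the first and second inclusions by $\bone_N^T \otimes I_n$ and $\bone_N^T \otimes I_m$, respectively, i.e., sum over node blocks. Two algebraic identities drive the reduction: $(\bone_N^T \otimes I)(B \otimes I) = (\bone_N^T B) \otimes I = 0$, which erases the $\bmu^*$ and $\bz^*$ contributions; and $(\bone_N^T \otimes I_n)\mathcal{R}^T = I_n$, because the blocks $\mathcal{R}_i^T v_i$ occupy disjoint coordinate slots in $\rset{n}{}$. The same disjoint-support argument yields $\sum_i N_{\hat{\mathcal{X}}_i}(\bone_N \otimes y^*) = N_{\mathcal{X}}(y^*)$. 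For the $\blambda$-inclusion, summing produces $\bzero \in -(Ay^* - c) + N_{\rset{m}{+}}(\lambda^*)$ after invoking $\sum_i c_i = c$ and the cone identity $N_{\rset{m}{+}}(\lambda^*) + \cdots + N_{\rset{m}{+}}(\lambda^*) = N_{\rset{m}{+}}(\lambda^*)$. Together these reproduce \eqref{eq:kkt-gvi} with $x^*$ replaced by $y^*$.

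For the converse, given $(y^\dagger, \lambda^\dagger)$ satisfying \eqref{eq:kkt-gvi}, I would set $\by^\dagger = \bone_N \otimes y^\dagger$, $\blambda^\dagger = \bone_N \otimes \lambda^\dagger$, and $\bmu^\dagger = \bzero$, which trivially zeros the third, fourth, and Laplacian components. The first inclusion then splits across the blocks of $\hat{\mathcal{X}}$: in each $y_i^i$ sub-block it reduces to the first KKT relation (holds by assumption), and in each $y_i^k$ sub-block with $k \neq i$ both sides vanish. For $\bz^\dagger$, I need $[B_m \bz^\dagger]_i = A_i y_i^\dagger - c_i - \nu_i$ with $\nu_i \in N_{\rset{m}{+}}(\lambda^\dagger)$. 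Choosing $\nu_i = (1/N)(Ay^\dagger - c)$ is admissible because the second KKT relation gives $Ay^\dagger - c \in N_{\rset{m}{+}}(\lambda^\dagger)$ and this cone is closed under positive scaling; the resulting right-hand side sums to zero over $i$ and therefore lies in $\text{range}(B_m) = \ker(\bone_N^T \otimes I_m)$ for the connected graph, guaranteeing a feasible $\bz^\dagger$.

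The subtle point I expect to require the most care is the indexing across three different spaces: the $nN$-dimensional augmented space carrying $\by$, the $n$-dimensional collective decision space, and the per-player $n$-dimensional slices $y_i$ with their internal $y_i^j$ sub-blocks. In particular, one must verify that the $y_i^k$ sub-blocks with $k \neq i$ contribute no primal-side constraints (they only pin down the dual $\bmu^*$) and that summation over node blocks produces the un-rescaled KKT system rather than an $N$-fold copy, which hinges on the disjoint-support structure of $\mathcal{R}^T$ and $N_{\hat{\mathcal{X}}}$ together with the cone closure of $N_{\rset{m}{+}}$. Once this bookkeeping is set up, the remaining steps are essentially linear algebra on $B$, $L$, and the Kronecker products.
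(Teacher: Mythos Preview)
Your proposal is correct. The paper does not supply its own proof of this theorem; it simply cites \cite[Thm.~1]{huang2021distributed}, so there is no in-paper argument to compare against. That said, your route---consensus from $\ker(B^T)=\operatorname{span}(\bone_N)$, annihilation of the Laplacian terms, left-multiplication by $\bone_N^T\otimes I$ to collapse the node blocks (using $(\bone_N^T\otimes I_n)\mathcal{R}^T=I_n$, $\bone_N^TB=0$, and the disjoint-support identity for $N_{\hat{\mathcal{X}}}$), and for the converse the choice $\bmu^\dagger=\bzero$ together with $\nu_i=\tfrac{1}{N}(Ay^\dagger-c)\in N_{\rset{m}{+}}(\lambda^\dagger)$ so that the required right-hand side lies in $\operatorname{range}(B_m)=\ker(\bone_N^T\otimes I_m)$---is exactly the standard argument for this class of equivalence results and is what one would expect the referenced proof to contain.
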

\begin{proof}
See the proof of \cite[Thm.~1]{huang2021distributed}.
\end{proof}

Thus, finding a v-SGNE of the game in \eqref{eq:pf-optprob} is equivalent to solving for a zero point of the operator $\mathbb{T}$. 
To facilitate the convergence analysis of the algorithm to be proposed for the latter task, we make two parallel assumptions, either of which is instrumental for the convergence proof in Sect.~\ref{sect:convg}. 
\begin{assumption}{(Convergence Condition)}\label{asp:convg}
At least one of the following statements holds:
\begin{outline}[enumerate]
\1 the operator $\mathcal{R}^T\extgjacob+ \frac{\rho_\mu}{2}L_n$ is maximally monotone;
\1 the pseudogradient $\gjacob$ is strongly monotone and Lipschitz continuous, i.e., there exist $\eta > 0$ and $\theta_1 > 0$, such that $\forall x, x' \in \rset{n}{}$, $\langle x - x', \gjacob(x) - \gjacob(x')\rangle \geq \eta \norm{x - x'}^2$ and $\norm{\gjacob(x) - \gjacob(x')} \leq \theta_1\norm{x - x'}$. The operator $\mathcal{R}^T\extgjacob$ is Lipschitz continuous, i.e., there exists $\theta_2 > 0$, such that $\forall \by, \by' \in \rset{nN}{}$, $\norm{\extgjacob(\by) - \extgjacob(\by')} \leq \theta_2 \norm{\by - \by'}$. 
\Tblue{Moreover, $\rho_\mu \geq \frac{2}{\sigma_1}(\frac{(\theta_1 + \theta_2)^2}{4\eta} + \theta_2)$, where $\sigma_1$ is the smallest positive eigenvalue of $L$.}
\end{outline}
\end{assumption}

\section{An Augmented Best-Response Scheme}\label{sect:dr-algm}

To compute the zeros of the operator $\optT$ given in the previous section, we leverage the Douglas-Rachford (DR) splitting method which combines operator splitting and the \Tblue{Krasnoselskii-Mann} (K-M) schemes. 
Given a nonexpansive operator $Q$ with a nonempty fixed point set $\fix{Q}$, the K-M scheme \cite[Sec.~5.2]{BauschkeHeinzH2017CAaM} suggests the following iteration:
\begin{equation}\label{eq:KM-iter}
    \psi^{(k+1)} \coloneqq \psi^{(k)} + \gamma^{(k)}(Q\psi^{(k)} - \psi^{(k)}),
\end{equation}
where $(\gamma^{(k)})_{k \in \nset{}{}}$ is a sequence such that $\gamma^{(k)} \in [0, 1]$ for all $k \in \nset{}{}$ and $\sum_{k \in \nset{}{}}\gamma^{(k)}(1 - \gamma^{(k)}) = \infty$. 
Here, we introduce a set of local bounded box constraints $\{\mathcal{X}^B_i\}$ which can be chosen manually as long as it satisfies $\mathcal{X}_i \subseteq \mathcal{X}^B_i$ for all $i \in \playerN$.
We similarly define the extended box set $\hat{\mathcal{X}}^B \coloneqq \hat{\mathcal{X}}^B_1 \times \hat{\mathcal{X}}^B_2 \times \cdots \hat{\mathcal{X}}^B_N$ where the extended box set of each player $i$ is defined as $\hat{\mathcal{X}}^B_i \coloneqq \mathbb{R}^{n_{<i}} \times \mathcal{X}^B_i \times \mathbb{R}^{n_{>i}}$. 
It is easy to see that the normal cones of $\hat{\mathcal{X}}^B$ and $\hat{\mathcal{X}}$ satisfy $N_{\hat{\mathcal{X}}^B} + N_{\hat{\mathcal{X}}} = N_{\hat{\mathcal{X}}}$. 
The motivation for introducing these box sets is to simplify the computation while maintaining boundedness for the convergence analysis as we will show later in this paper. 
We split the operator $\optT$ into the following operators $\mathbb{A}$ and $\mathcal{B}$:
\begin{equation}
    \optA: 
    \psi \mapsto (D + \optAy)\psi \;\text{and}\;
    \optB: 
    \psi \mapsto 
    (D + \mathcal{B}_y)\psi
\end{equation}
with $D$, $\mathcal{A}_y$, and $\mathcal{B}_y$ defined by 
\begin{equation}
    D = \begin{bmatrix}
    \frac{\rho_\mu}{2}L_n & \frac{1}{2}(\diagA\mathcal{R})^T & \frac{1}{2}B_n & 0 \\
    -\frac{1}{2}\diagA\mathcal{R} & \frac{\rho_z}{2}L_m & 0 & \frac{1}{2}B_m \\
    -\frac{1}{2}B_n^T & 0 & 0 & 0 \\
    0 & -\frac{1}{2}B_m^T & 0 & 0
    \end{bmatrix}, 
\end{equation}
\begin{equation}
\small
    \optAy : \psi \mapsto \begin{bmatrix}
    \mathcal{R}^T\extgjacob(\by) + N_{\mathcal{\hat{X}^B}}(\by) \\
    \boldsymbol c\\
    0 \\
    0
    \end{bmatrix}
    , \;
    \mathcal{B}_y : \psi \mapsto \begin{bmatrix}
    N_{\mathcal{\hat{X}}}(\by) \\
    N_{\rset{mN}{+}}(\blambda)\\
    0 \\
    0
    \end{bmatrix}. 
\normalsize
\end{equation}

Furthermore, we introduce the following design matrix $\Phi$ for distributedly computing the resolvents $J_{\Phi^{-1}\optA}$ and $J_{\Phi^{-1}\optB}$: 
\begin{equation}\label{eq:dm-phi}
\small
    \Phi = \begin{bmatrix}
    \btau_1^{-1} - \frac{\rho_\mu}{2}L_n & -\frac{1}{2}(\diagA\mathcal{R})^T & -\frac{1}{2}B_n & 0 \\
    -\frac{1}{2}\diagA\mathcal{R} & \btau_2^{-1} - \frac{\rho_z}{2}L_m & 0 & -\frac{1}{2}B_m \\
    -\frac{1}{2}B_n^T & 0 & \btau_3^{-1} & 0 \\
    0 & -\frac{1}{2}B_m^T & 0 & \btau_4^{-1}
    \end{bmatrix},
\normalsize
\end{equation}
where $\btau_1 \coloneqq \blkd{\tau_{11}I_n, \ldots, \tau_{1N}I_n}$ with $\tau_{11} \in \rset{}{++}, \ldots, \tau_{1N} \in \rset{}{++}$; similarly for $\btau_2$, $\btau_3$ and $\btau_4$. 
Notice that these step sizes $\btau_{1}, \ldots, \btau_{4}$ should be small enough to guarantee that $\Phi$ is positive definite. 
% Conservative upper bounds for these step sizes \cite[Lemma~1]{huang2021distributed} can be derived using the Gershgorin circle theorem \cite{bell1965gershgorin}. 
Conservative upper bounds for these step sizes can be derived using the Gershgorin circle theorem \cite{bell1965gershgorin}. 

% \begin{assumption}\label{asp:phi-pd}
% The step sizes $\btau_1, \ldots, \btau_4$ are chosen properly such that the design matrix $\Phi$ in \eqref{eq:dm-phi} is positive definite. 
% \end{assumption}
\begin{assumption}\label{asp:phi-pd}
The step sizes $\btau_1, \ldots, \btau_4$ are chosen properly such that the design matrix $\Phi$ in \eqref{eq:dm-phi} is positive definite. \Tblue{Specifically, it suffices to choose 
$\tau_{1i}^{-1} > \frac{1}{2}\norm{A_i}_1 + (\frac{1}{2} + \rho_\mu) d_i$, $\tau_{2i}^{-1} > \frac{1}{2}\norm{A_i}_\infty + (\frac{1}{2} + \rho_z) d_i, \forall i \in \mathcal{N}$, and $\tau_{3j}^{-1} > 1$, $\tau_{4j}^{-1} > 1$, $\forall j \in \edgeE_g$. }
\end{assumption}

\Tblue{Here, $d_i$ denotes the degree of node/player $i$. In general, determining the above step sizes requires some global information acquired through coordination among players such as a proper $\rho_{\mu}$}. 
After the incorporation of the design matrix $\Phi$, we now work in the inner product space $\pspace$ which is a real vector space endowed with the inner product $\langle \psi_1, \psi_2 \rangle_\pspace = \psi_1^T \Phi \psi_2$.
For brevity, let $\poptA \coloneqq \Phi^{-1}\optA$ and $\poptB \coloneqq \Phi^{-1}\optB$. 
% A well-known class of operator splitting methods is the Douglas-Rachford splitting, which, when applied to the operators $\poptA$ and $\poptB$, results in Algorithm~2 in \cite{huang2021distributed}. 
In the DR splitting scheme, the general operator $Q$ in \eqref{eq:KM-iter} is given by $\rrcomp_* \coloneqq R_{\poptB} \circ R_{\poptA}$ and it suggests the following exact iteration:
\begin{equation}\label{eq:exact-iter}
    \tilde{\psi}^{(k+1)} \coloneqq \mathscr{P}_*(\tilde{\psi}^{(k)}), \text{ with }\mathscr{P}_* = \idty + \gamma^{(k)}(\rrcomp_* - \idty).
\end{equation} 
Given a generic single-valued operator $Q$, we say that $Q$ is restricted nonexpansive w.r.t. a set $S$ if, for all $\psi \in \dom{Q}$ and $\psi^* \in S$, $\inlnorm{Q\psi - Q\psi^*} \leq \inlnorm{\psi - \psi^*}$ \cite{pavel2019distributed}; 
if, in addition, $S = \fix{Q}$, then $Q$ is quasinonexpansive \cite[Def.~4.1(v)]{BauschkeHeinzH2017CAaM}. 
From the main convergence results in \cite[Thm.~2\&3]{huang2021distributed}, if Assumptions~\ref{asp:subgrad} to \ref{asp:phi-pd} hold, even though $\rrcomp_*$ is not nonexpansive in a general sense, it possesses quasinonexpansiveness in the inner-product space $\pspace$, and hence the sequence $(y_i^{(k)})_{k \in \nset{}{}}$ generated by the exact iteration above (see \cite[Algorithm~1]{huang2021distributed} for detailed implementations) will converge to a v-SGNE of the original problem defined in \eqref{eq:pf-optprob}. 

However, unlike the problem setting in \cite{huang2021distributed} where each player has a closed-form objective function, here the objective function is expected-value, and all too often its closed-form expression may be unavailable. 
Consequently, the $\argmin$ operation in the first player loop of \cite[Algorithm~1]{huang2021distributed} can not be carried out exactly. 
In this case, we need a desirable inexact solver such that, although at each iteration step, it can only get an approximate solution, the computed sequence can still eventually converge to a v-SGNE. 
We let $R_{\appoptA}$ denote the (scenario-based) approximate operator to the exact reflected resolvent $R_{\poptA}$, and $\rrcomp$ denote the corresponding composite $R_{\poptB} \circ R_{\appoptA}$. 
Substituting the operator $\rrcomp_*$ with $\rrcomp$ in \cite[Algorithm~1]{huang2021distributed} gives rise to the following approximate iteration:
\begin{equation}\label{eq:approx-iter}
    \tilde{\psi}^{(k+1)} \coloneqq \mathscr{P}(\tilde{\psi}^{(k)}), \text{ with }\mathscr{P} = \idty + \gamma^{(k)}(\rrcomp - \idty).
\end{equation}
The updating steps of \eqref{eq:approx-iter} are presented in Algorithm~\ref{alg:node-edge}. 
For brevity, let $\supsub{\tilde{y}}{-i(k)}{iL} \coloneqq \sumssub{j \in \neighbN{}{i}}(\supsub{\tilde{y}}{-i(k)}{i} - \supsub{\tilde{y}}{-i(k)}{j})$, and similarly for $\supsub{\tilde{y}}{i(k)}{iL}$, $\supsub{\tilde{\lambda}}{(k)}{iL}$, $\supsub{\hat{y}}{i(k+1)}{iL}$, and $\supsub{\hat{\lambda}}{(k+1)}{iL}$; 
let $\supsub{\tilde{\mu}}{-i(k)}{iB} \coloneqq \sumssub{j \in \neighbN{+}{i}}\supsub{\tilde{\mu}}{-i(k)}{ji} - \sumssub{j \in\neighbN{-}{i}}\supsub{\tilde{\mu}}{-i(k)}{ij}$, and similarly for $\supsub{\tilde{\mu}}{i(k)}{iB}$, $\supsub{\tilde{z}}{(k)}{iB}$, $\supsub{\hat{\mu}}{(k+1)}{iB}$, and $\supsub{\hat{z}}{(k+1)}{iB}$;
let $\supsub{\hat{y}}{(k+1)}{ji} \coloneqq \supsub{\hat{y}}{(k+1)}{i} - \supsub{\hat{y}}{(k+1)}{j}$, and similarly for $\supsub{\hat{\lambda}}{(k+1)}{ji}$, $\supsub{\bar{y}}{(k+1)}{ji}$, and $\supsub{\bar{\lambda}}{(k+1)}{ji}$. 

\begin{algorithm}
\SetAlgoLined
\caption{Distributed v-SGNE Seeking under the Partial-Decision Information Setting}
\label{alg:node-edge}
\textbf{Initialize:} $\{\supsub{\Tilde{y}}{(0)}{i}\}, \{\supsub{\tilde{\lambda}}{(0)}{i}\}, \{\supsub{\tilde{\mu}}{(0)}{ji}\}, \{\supsub{\tilde{z}}{(0)}{ji}\}$\;
\textbf{Iterate until convergence:}\\
\For{player $i \in \playerN$}{

\Tblue{Communicate with neighboring players and edges to obtain $\supsub{\tilde{y}}{(k)}{iL}$, $\supsub{\tilde{\mu}}{(k)}{iB}$, $\supsub{\tilde{\lambda}}{(k)}{iL}$, $\supsub{\tilde{z}}{(k)}{iB}$\;}

$\supsub{y}{-i(k+1)}{i} = \supsub{\Tilde{y}}{-i(k)}{i} - \frac{\supsub{\tau}{}{1i}}{2}(\supsub{\rho}{}{\mu} \supsub{\tilde{y}}{-i(k)}{iL} + \supsub{\tilde{\mu}}{-i(k)}{iB})$ \;

\Tblue{Obtain $y^{i(k+1)}_{i}$ via Subroutine~\ref{alg:proj-stoch-subgrad} that approximately solves: ${\argmin}_{v_i \in \mathcal{X}^B_i}[\supsub{\mathbb{J}}{}{i}(\supsub{v}{}{i}; \supsub{y}{-i(k+1)}{i}) + \frac{1}{2}\supsub{(\Tilde{\lambda}}{(k)}{i})^TA_i\supsub{v}{}{i} $\\
$ \qquad\qquad + \frac{1}{2}{(\supsub{\rho}{}{\mu}\supsub{\tilde{y}}{i(k)}{iL} + \supsub{\mu}{i(k)}{iB})}^T \supsub{v}{}{i} + \frac{1}{2\supsub{\tau}{}{1i}}\norm{\supsub{v}{}{i} - \supsub{\Tilde{y}}{i(k)}{i}}^2]$\;}
$\supsub{\lambda}{(k+1)}{i} = \supsub{\Tilde{\lambda}}{(k)}{i} + \supsub{\tau}{}{2i}(\supsub{A}{}{i}(\supsub{y}{i(k+1)}{i} - \frac{1}{2}\supsub{\Tilde{y}}{i(k)}{i})  -\frac{\supsub{\rho}{}{z}}{2}\supsub{\tilde{\lambda}}{(k)}{iL} -\frac{1}{2}\supsub{\tilde{z}}{(k)}{iB} - \supsub{c}{}{i})$\;
$\supsub{\hat{y}}{(k+1)}{i}=2\supsub{y}{(k+1)}{i} - \supsub{\tilde{y}}{(k)}{i}, \;
\supsub{\hat{\lambda}}{(k+1)}{i}=2\supsub{\lambda}{(k+1)}{i}-\supsub{\tilde{\lambda}}{(k)}{i}$\;
}
\For{edge $(j,i) \in \edgeE_g$}{
\Tblue{Comm. with player $i$ \& $j$ to obtain $\supsub{\hat{y}}{(k+1)}{ji}$, $\supsub{\hat{\lambda}}{(k+1)}{ji}$\;}
$\supsub{\mu}{(k+1)}{ji} = \supsub{\Tilde{\mu}}{(k)}{ji} + \frac{\supsub{\tau}{}{3i}}{2} \supsub{\hat{y}}{(k+1)}{ji},\; 
\supsub{\hat{\mu}}{(k+1)}{ji} = 2\supsub{\mu}{(k+1)}{ji} - \supsub{\tilde{\mu}}{(k)}{ji}$\;
$\supsub{z}{(k+1)}{ji} = \supsub{\Tilde{z}}{(k)}{ji} + \frac{\supsub{\tau}{}{4i}}{2}\supsub{\hat{\lambda}}{(k+1)}{ji}, \; 
\supsub{\hat{z}}{(k+1)}{ji} = 2\supsub{z}{(k+1)}{ji} - \supsub{\tilde{z}}{(k)}{ji}$\;
}

\For{player $i \in \mathcal{N}$}{
\Tblue{Comm. \& obtain $\supsub{\hat{y}}{(k+1)}{iL}$, $\supsub{\hat{\mu}}{(k+1)}{iB}$, $\supsub{\hat{\lambda}}{(k+1)}{iL}$, $\supsub{\hat{z}}{(k+1)}{iB}$\;}
$\supsub{\bar{y}}{(k+1)}{i} = \proj_{\scriptscriptstyle\hat{\mathcal{X}}_i} [\supsub{\hat{y}}{(k+1)}{i}-\frac{\supsub{\tau}{}{1i}}{2}(\supsub{\mathcal{R}}{T}{i}\supsub{A}{T}{i} \supsub{\hat{\lambda}}{(k+1)}{i} + \supsub{\rho}{}{\mu} \supsub{\hat{y}}{(k+1)}{iL} + \supsub{\hat{\mu}}{(k+1)}{iB} )]$\;

$\supsub{\bar{\lambda}}{(k+1)}{i} = \proj_{\scriptscriptstyle\rset{\scriptscriptstyle m}{\scriptscriptstyle +}} [\supsub{\hat{\lambda}}{(k+1)}{i} + \supsub{\tau}{}{2i}
(\supsub{A}{}{i}(\supsub{\bar{y}}{i(k+1)}{i} - \frac{1}{2}\supsub{\hat{y}}{i(k)}{i}) $\\
$ \quad - \frac{\supsub{\rho}{}{z}}{2}\supsub{\hat{\lambda}}{(k+1)}{iL} -\frac{1}{2}\supsub{\hat{z}}{(k+1)}{iB}
)]$\;
}

\For{edge $(j,i) \in \edgeE_g$}{
\Tblue{Comm. \& obtain $\supsub{\bar{y}}{(k+1)}{ji}$, $\supsub{\bar{\lambda}}{(k+1)}{ji}$\;}
$\supsub{\bar{\mu}}{(k+1)}{ji} = \supsub{\hat{\mu}}{(k+1)}{ji} + \supsub{\tau}{}{3i}(\supsub{\bar{y}}{(k+1)}{ji} - \frac{1}{2}\supsub{\hat{y}}{(k+1)}{ji})$\;
$\supsub{\bar{z}}{(k+1)}{ji} = \supsub{\hat{z}}{(k+1)}{ji} + \supsub{\tau}{}{4i}(\supsub{\bar{\lambda}}{(k+1)}{ji} - \frac{1}{2}\supsub{\hat{\lambda}}{(k+1)}{ji})$\;
}
\textbf{K-M updates:} $\supsub{\tilde{\psi}}{(k+1)}{} = \supsub{\tilde{\psi}}{(k)}{} + 2\supsub{\gamma}{(k)}{}(\supsub{\bar{\psi}}{(k+1)}{} - \supsub{\psi}{(k+1)}{});$ \\
\textbf{Return:} \Tblue{$\{\supsub{\tilde{y}}{(k)}i\}$}.
\end{algorithm}
Depending on the inexact solver adopted, $R_{\appoptA}$ usually admits no explicit formulas. 
Yet, as will be shown later in the next section, we can still establish the convergence of Algorithm~\ref{alg:node-edge} based on some specific properties of $R_{\appoptA}$.

\section{Convergence Analysis and Construction of Inexact Solver}\label{sect:convg}

\subsection{General Convergence Results Using Approximate Solution}

We start by stating the Robbins-Siegmund theorem \cite[Thm.~1]{robbins1971convergence}, which plays a significant role in analyzing the convergence of algorithms in the field of stochastic optimization. 
% \begin{theorem}\label{thm:R-S-thm}
% Let $(\Omega, \mathcal{F}, \mathcal{P})$ be a probability space and $\mathcal{F}_0 \subseteq \mathcal{F}_1 \subseteq \mathcal{F}_2 \subseteq \cdots$ a sequence of sub-$\sigma$-algebras of $\mathcal{F}$. For each $n = 0, 1, 2, \ldots $, let $z_n$, $\beta_n$, $\xi_n$, and $\zeta_n$ be non-negative $\mathcal{F}_n$-measurable random variables such that 
% \begin{equation}\label{eq:R-S-recur}
% \expt{}{z_{n+1} \mid \mathcal{F}_n} \leq z_n(1 + \beta_n) + \xi_n - \zeta_n.
% \end{equation}
% Then $\lim_{n \to \infty}z_n$ exists and is finite and $\sum_{n=0}^{\infty}\zeta_n < \infty$ a.s. on $\{\sum_{n=0}^{\infty}\beta_n < \infty, \sum_{n=0}^{\infty}\xi_n < \infty\}$.
% \end{theorem}
% In this subsection, we study the sufficient conditions on the approximation accuracy per iteration to guarantee the convergence of Algorithm~\ref{alg:node-edge} to a v-SGNE of the problem \eqref{eq:pf-optprob}. 
\Tblue{In this subsection, we study the sufficient conditions from a generic perspective to guarantee the convergence of Algorithm~\ref{alg:node-edge} to a v-SGNE of the problem \eqref{eq:pf-optprob}. }
We first define the approximate error and its norm for each iteration as 
\begin{align}
\supsub{\epsilon}{(k)}{} \coloneqq \mathscr{R}(\supsub{\tilde{\psi}}{(k)}{}) - \mathscr{R}_*(\supsub{\tilde{\psi}}{(k)}{}) \;\text{and}\;
\supsub{\varepsilon}{(k)}{} \coloneqq \norm{\supsub{\epsilon}{(k)}{}}_\pspace, 
\end{align}
where $\supsub{\tilde{\psi}}{(k)}{} \coloneqq [\supsub{\tilde{\by}}{(k)}{}; \supsub{\tilde{\blambda}}{(k)}{}; \supsub{\tilde{\bmu}}{(k)}{}; \supsub{\tilde{\bz}}{(k)}{}]$. 
We next introduce the residual function $\res{\tilde{\psi}} \coloneqq \norm{\tilde{\psi} - \rrcomp_*(\tilde{\psi})}_\pspace$ such that $\res{\tilde{\psi}^*} = 0$ is a necessary condition for $\tilde{\psi}^*$ to belong to the fixed-point set of $\rrcomp_*$. 
This relation can be easily checked by using \cite[Prop.~26.1(iii)]{BauschkeHeinzH2017CAaM}.
Let $\mathcal{F}_{k}$ denote the $\sigma$-field comprised of $\{\supsub{\tilde{\psi}}{(0)}{}, \{\supsub{\xi}{(0)}{i}\}_{i \in \playerN}, \ldots, \{\supsub{\xi}{(k-1)}{i}\}_{i \in \playerN}\}$, where for each major iteration $k \in \nset{}{}$, $\supsub{\xi}{(k)}{i} = \{\supsub{\xi}{(k)}{i, 0}, \ldots, \supsub{\xi}{(k)}{i, \supsub{T}{(k)}{i} - 1}\}$ and $\supsub{T}{(k)}{i}$ denotes the number of noise realizations that player $i$ observes at the $k$-th iteration.
\begin{theorem}\label{thm:main-convg-thm}
Consider the SGNEP given in \eqref{eq:pf-optprob}, and suppose Assumptions~\ref{asp:subgrad} to \ref{asp:phi-pd} hold. 
Moreover, $(\gamma^{(k)})_{k \in \nset{}{}}$ is a sequence such that $\gamma^{(k)}\in [0, 1]$ and $\sum_{k \in \nset{}{}}\gamma^{(k)}(1 - \gamma^{(k)}) = +\infty$. 
If the sequence $(\tilde{\psi}^{(k)})$ generated by the inexact solver satisfies
\begin{outline}[enumerate]
    \1 $(\norm{\tilde{\psi}^{(k)}}_\pspace)_{k \in \nset{}{}}$ is bounded a.s.;
    \1 $\sum_{k \in \nset{}{}}\gamma^{(k)}\expt{}{\varepsilon^{(k)} \mid \mathcal{F}^{(k)}} < \infty, \text{ a.s. }$, 
\end{outline}
\Tblue{then $(\tilde{\psi}^{(k)})_{k \in \nset{}{}}$ converges to a fixed point of $\mathscr{R}_*$ a.s., and $\lim_{k \to \infty} J_{\poptA}(\tilde{\psi}^{(k)}) = \psi^*$ a.s. 
Also, the corresponding entries of $\psi^*$ satisfy $\by^* = (\bone_N \otimes y^*)$ and $\blambda^{(k)} = (\bone \otimes \lambda^*)$. 
Here, $y^*$ is a v-SGNE of the original SGNEP \eqref{eq:pf-optprob} and $(y^*, \lambda^*)$ together is a solution to the KKT conditions \eqref{eq:v-sgne} of the SGNEP. }
% then $(\by^{(k)})_{k \in \nset{}{}}$ and $(\blambda^{(k)})_{k \in \nset{}{}}$ generated by Algorithm~\ref{alg:node-edge} satisfy a.s. $\lim_{k \to \infty} \by^{(k)} = (\bone_N \otimes y^*)$ and $\lim_{k \to \infty} \blambda^{(k)} = (\bone \otimes \lambda^*)$, where $y^*$ is a v-SGNE of the original SGNEP \eqref{eq:pf-optprob} and $(y^*, \lambda^*)$ together is a solution to the KKT conditions \eqref{eq:v-sgne} of the SGNEP. 
\end{theorem}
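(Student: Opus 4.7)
The plan is to apply the Robbins--Siegmund supermartingale convergence theorem to the nonnegative sequence $\supsub{a}{}{k} \coloneqq \inlnorm{\supsub{\tilde\psi}{(k)}{} - \psi^*}_{\pspace}^2$, where $\psi^* \in \fix{\rrcomp_*}$ is arbitrary (the set $\fix{\rrcomp_*}$ is nonempty by Theorem~\ref{thm:zerokkt} together with Assumption~\ref{asp:ne-exist}, since any $y^\dagger$ guaranteed by the latter lifts to a zero of $\optT$ and hence to a fixed point of $\rrcomp_*$ via $J_{\poptA}$). First I would rewrite the inexact iteration \eqref{eq:approx-iter} as
\[
\supsub{\tilde\psi}{(k+1)}{} = (1 - \supsub{\gamma}{(k)}{})\supsub{\tilde\psi}{(k)}{} + \supsub{\gamma}{(k)}{}\rrcomp_*(\supsub{\tilde\psi}{(k)}{}) + \supsub{\gamma}{(k)}{} \supsub{\epsilon}{(k)}{}
\]
and invoke the quasinonexpansiveness of $\rrcomp_*$ in $\pspace$ (established in \cite[Thm.~2]{huang2021distributed} under Assumptions~\ref{asp:subgrad}--\ref{asp:phi-pd}) together with the convex-combination identity $\inlnorm{(1-\gamma)u + \gamma v}^2 = (1-\gamma)\inlnorm{u}^2 + \gamma\inlnorm{v}^2 - \gamma(1-\gamma)\inlnorm{u-v}^2$ with $u = \supsub{\tilde\psi}{(k)}{}-\psi^*$ and $v = \rrcomp_*(\supsub{\tilde\psi}{(k)}{})-\psi^*$. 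After Cauchy--Schwarz on the cross term produced by $\supsub{\epsilon}{(k)}{}$ and after bounding the resulting factor $\inlnorm{\supsub{\tilde\psi}{(k+1)}{} - \psi^* - \supsub{\gamma}{(k)}{}\supsub{\epsilon}{(k)}{}}_\pspace$ by a deterministic constant via hypothesis~(i), taking conditional expectation with respect to $\mathcal{F}_k$ produces a recursion of the form
\[
\expt{}{\supsub{a}{}{k+1}\mid\mathcal{F}_k} \leq \supsub{a}{}{k} - \supsub{\gamma}{(k)}{}(1-\supsub{\gamma}{(k)}{}) \res{\supsub{\tilde\psi}{(k)}{}}^2 + C \supsub{\gamma}{(k)}{} \expt{}{\supsub{\varepsilon}{(k)}{}\mid\mathcal{F}_k},
\]
where the second-order noise term $(\supsub{\gamma}{(k)}{})^2 \expt{}{(\supsub{\varepsilon}{(k)}{})^2\mid\mathcal{F}_k}$ is absorbed into the first-order one because $\supsub{\varepsilon}{(k)}{}$ is almost surely bounded (itself a consequence of~(i) combined with the projection onto $\hat{\mathcal{X}}^B$ in Algorithm~\ref{alg:node-edge}, which keeps $\rrcomp(\supsub{\tilde\psi}{(k)}{})$ inside a bounded set).

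Hypothesis~(ii) then supplies exactly the summability required by Robbins--Siegmund, yielding two conclusions almost surely: $(\supsub{a}{}{k})$ converges for every fixed $\psi^* \in \fix{\rrcomp_*}$, and $\sum_{k} \supsub{\gamma}{(k)}{}(1-\supsub{\gamma}{(k)}{})\res{\supsub{\tilde\psi}{(k)}{}}^2 < \infty$. Combined with $\sum_k \supsub{\gamma}{(k)}{}(1-\supsub{\gamma}{(k)}{}) = \infty$, the latter forces $\liminf_k \res{\supsub{\tilde\psi}{(k)}{}} = 0$ a.s. I would then close the proof via a standard Opial--type argument: boundedness~(i) supplies at least one cluster point $\supsub{\tilde\psi}{\infty}{}$ along a residual-to-zero subsequence; demiclosedness of $\idty - \rrcomp_*$ at zero promotes this to $\supsub{\tilde\psi}{\infty}{} \in \fix{\rrcomp_*}$; and the fact that $\supsub{a}{}{k}$ converges for \emph{every} element of $\fix{\rrcomp_*}$ (including $\supsub{\tilde\psi}{\infty}{}$) forces uniqueness of the cluster point, hence full-sequence convergence. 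Continuity of the resolvent $J_{\poptA}$ then delivers $J_{\poptA}(\supsub{\tilde\psi}{(k)}{}) \to \psi^* \coloneqq J_{\poptA}(\supsub{\tilde\psi}{\infty}{})$, and Theorem~\ref{thm:zerokkt} identifies $\psi^* \in \zer{\optT}$ with the claimed consensus structure $\by^* = \bone_N \otimes y^*$, $\blambda^* = \bone_N \otimes \lambda^*$, with $(y^*,\lambda^*)$ solving the KKT system \eqref{eq:kkt-gvi}.

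The principal obstacle I anticipate is the demiclosedness step: because $\rrcomp_* = R_{\poptB} \circ R_{\poptA}$ is built from reflected resolvents of maximally monotone operators whose normal-cone components are only outer semicontinuous, graph closedness of $\idty - \rrcomp_*$ at zero in the $\pspace$-topology is not automatic and must be secured separately via Assumption~\ref{asp:convg}(i) (which makes $\mathcal{R}^T\extgjacob + \tfrac{\rho_\mu}{2}L_n$ maximally monotone) or Assumption~\ref{asp:convg}(ii) (which upgrades $R_{\poptA}$ to a single-valued Lipschitz map and renders the argument trivial). A secondary technicality, flagged above, is that hypothesis~(ii) bounds only the first conditional moment of the noise, so dominating the second-order residual by a summable quantity genuinely depends on the iterate boundedness~(i) and the bounded box construction $\hat{\mathcal{X}}^B$; without the box, neither $\supsub{\varepsilon}{(k)}{}$ nor the cross-term prefactor would be a priori bounded, and the Robbins--Siegmund recursion would fail to close.
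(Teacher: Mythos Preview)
Your proposal is correct and follows essentially the same route as the paper's proof: the paper also derives the Robbins--Siegmund recursion from the convex-combination identity and quasinonexpansiveness of $\rrcomp_*$, absorbs the second-order noise term via the a.s.\ boundedness hypothesis~(i), extracts a residual-to-zero subsequence from $\sum_k \gamma^{(k)}(1-\gamma^{(k)})\res{\tilde\psi^{(k)}}^2<\infty$, and then closes with an Opial-type argument. The only cosmetic difference is that the paper bounds the Cauchy--Schwarz cross-term prefactor by $\inlnorm{\tilde\psi^{(k)}-\tilde\psi^*}_\pspace$ (via quasinonexpansiveness of $\mathscr{P}_*$) rather than directly by a constant from~(i); and for the demiclosedness step the paper invokes \cite[Cor.~4.28]{BauschkeHeinzH2017CAaM} under Assumption~\ref{asp:convg}(i) and continuity of $\rrcomp_*$ from \cite[Lemma~6]{huang2021distributed} under Assumption~\ref{asp:convg}(ii), exactly the two cases you anticipate.
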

\begin{proof}
See Appendix~\ref{pf:main-convg-thm}. 
\end{proof}

Before proceeding, it is worth highlighting why we need to keep both the condition (i) and (ii) to hold in Theorem~\ref{thm:main-convg-thm}. 
Although the condition (i), i.e., $(\norm{\tilde{\psi}^{(k)}}_\pspace)_{k \in \nset{}{}}$ is bounded a.s., is a necessary condition for the summability statement in (ii), as has been showed in \cite[Prop.~5.34]{BauschkeHeinzH2017CAaM} for deterministic cases, under the partial-information setting, a natural strategy is to prove the condition (i) first using a more primitive condition, and then establish the condition (ii) based on (i). 
\Tblue{The specific conditions regarding the algorithm parameters to ensure the convergence will be later discussed in Theorem~\ref{thm:summb}.} 

% With the condition (ii), i.e., $\sum_{k \in \nset{}{}} \gamma^{(k)} \expt{}{\varepsilon^{(k)} \mid \mathcal{F}^{(k)}} < \infty$ a.s., one can show that $(\norm{\tilde{\psi}^{(k)}}_\pspace)_{k \in \nset{}{}}$ is bounded a.s. 
% A similar proof for deterministic cases can be found in \cite[Prop.~5.34]{BauschkeHeinzH2017CAaM}. 
% Nevertheless, as will be discussed later, under the partial-information setting, ensuring the condition (ii) requires the fulfillment of condition (i), i.e., the almost-sure boundedness of $(\norm{\tilde{\psi}^{(k)}}_\pspace)_{k \in \nset{}{}}$. 
% Thus, we should prove the condition (i) using a more primitive condition than the condition (ii), and keep both (i) and (ii) in the statement of Theorem~\ref{thm:main-convg-thm}. 

\begin{remark}
When proving Theorem~\ref{thm:main-convg-thm}, the inequalities invoked follow from the quasinonexpansiveness of the exact operator $\rrcomp_*$ and the Cauchy-Schwarz inequality.
The proof and conclusion in Theorem~\ref{thm:main-convg-thm} thus can be applied to the analysis of a general \Tblue{continuous} operator $Q$ in \eqref{eq:KM-iter} and its approximation other than the operators $\rrcomp_*$ and $\rrcomp$ in this paper, as long as the operator $Q$ is quasinonexpansive and the conditions regarding $(\gamma^{(k)})_{k \in \nset{}{}}$, $(\varepsilon^{(k)})_{k \in \nset{}{}}$, and $(\tilde{\psi}^{(k)})_{k \in \nset{}{}}$ in Theorem~\ref{thm:main-convg-thm} are satisfied. 
\end{remark}

\subsection{Construction of a Desirable Inexact Solver}

As we discussed at the end of Section~\ref{sect:dr-algm}, it is challenging to solve the augmented best-response subproblems that involve the exact expected-value objectives (the $\argmin$ problems in the first player for-loop of Algorithm~\ref{alg:node-edge}). 
Theorem~\ref{thm:main-convg-thm} suggests that we can still obtain a v-SGNE by solving these augmented best-response subproblems not precisely but up to some prescribed accuracy. 
In this subsection, we consider a specific scenario-based solver using the projected stochastic subgradient method \cite[Ch.~2]{shor2012minimization}.
As has been shown in the existing literature\cite{lacoste2012simpler}, the weighted average of the projected stochastic subgradient method possesses an $O(1/t)$ convergence rate if the subgradient is unbiased and the variance of the subgradient is finite. 
Here, we study the explicit conditions that the projected stochastic subgradient solver should satisfy to serve as a feasible inexact solver in the context of distributed SGNEP with only partial-decision information, as suggested in Theorem~\ref{thm:main-convg-thm}. 

We first assume the unbiasedness and finite-variance properties of a general projected stochastic subgradient method. 
Throughout this subsection, we use $k$ to index the major iterations (the iteration of the v-SGNE seeking Algorithm~\ref{alg:node-edge}) and $t$ to index the minor iterations (the iteration of the inexact solver in the first player for-loop of Algorithm~\ref{alg:node-edge}). 
Furthermore, at each major iteration $k$, for each player $i$, 
let the augmented scenario-based objective function be denoted by 
$\hat{J}^{(k)}_{i}(v_i; \xi^{(k)}_{i,t}) \coloneqq J_i(v_i; y^{-i(k+1)}_i, \xi^{(k)}_{i,t}) + (\tilde{\varphi}^{(k)}_i)^Ty^i_i + \frac{1}{2\tau_{1i}}\norm{v_i - \tilde{y}^{i(k)}_i}^2_2$, 
and the augmented expected-value objective function be denoted by 
$\hat{\mathbb{J}}^{(k)}_i(v_i) \coloneqq \mathbb{J}_i(v_i; y^{-i(k+1)}_i) + (\tilde{\varphi}^{(k)}_i)^Tv_i + \frac{1}{2\tau_{1i}}\norm{v_i - \tilde{y}^{i(k)}_i}^2_2$, 
where $\tilde{\varphi}^{(k)}_i \coloneqq \frac{1}{2}(A_i^T\tilde{\lambda}^{(k)}_i + \tilde{\mu}^{i(k)}_{iB} + \rho_{\mu}\tilde{y}^{i(k)}_{iL})$. 
Note that $\hat{\mathbb{J}}^{(k)}_i(\cdot)$ is the objective in the first player-loop of Algorithm~\ref{alg:node-edge} that needs to be inexactly solved. 
Here, the vector $\tilde{\varphi}^{(k)}_i$ represents some augmented terms that enforce the consensus constraints and the global resource constraints. 
For brevity, the local estimates of the other players' decisions $y^{-i(k+1)}_i$ are omitted from the arguments of the augmented functions defined above. 
Let $T^{(k)}_i$ denote the total number of the projected stochastic subgradient steps taken in the $k$-th major iteration by player $i$.
The subgradient of the scenario-based objective function at the $k$-th major iteration and the $t$-th minor iteration is denoted by $g^{(k)}_{i,t} \in \partial_{y^i_i}\hat{J}^{(k)}_i(y^{i(k+1)}_{i,t}; \xi^{(k)}_{i,t})$, where $t = 0, 1, \ldots, T^{(k)}_i-1$. 

% \begin{assumption}\label{asp:proj-stoch-subgrad}
% For each player $i \in \playerN$, at each major iteration $k$ and minor iteration $t$ of Algorithm~\ref{alg:node-edge}, there exists a $g^{(k)}_{i,t} \in \partial_{y^i_i}\hat{J}^{(k)}_i(y^{i(k+1)}_{i,t}; \xi^{(k)}_{i,t})$ such that the following statements hold:
% \begin{outline}[enumerate]
% \1 (Unbiasedness) $\expt{}{g^{(k)}_{i,t} \mid \sigma\{\mathcal{F}_k, \xi^{(k)}_{i, [t]}\}}$ is a.s. a subgradient of the expected-value augmented objective $\hat{\mathbb{J}}_i^{(k)}(\cdot)$ at $y^{i(k+1)}_{i,t}$, where $\xi^{(k)}_{i, [t]} \coloneqq \{ \xi^{(k)}_{i, 0}, \ldots, \xi^{(k)}_{i,t-1} \}$ with $\xi^{(k)}_{i, [0]} \coloneqq \varnothing$;
% % \1 (Finite variance) $\expt{}{\norm{g^{(k)}_{i,t}}^2_2 \mid \mathcal{F}_k} \leq \alpha_{g,i}^2\norm{\tilde{\psi}^{(k)}}^2_2 + \beta_{g,i}^2$ a.s. for some positive constants $\alpha_{g,i}$ and $\beta_{g,i}$.
% \1 \Tblue{(Upper-bounded variance)} $\expt{}{\norm{g^{(k)}_{i,t}}^2_2 \mid \mathcal{F}_k} \leq \alpha_{g,i}^2\norm{\tilde{\psi}^{(k)}}^2_2 + \beta_{g,i}^2$ a.s. for some positive constants $\alpha_{g,i}$ and $\beta_{g,i}$.
% \end{outline}
% \end{assumption}

\begin{assumption}\label{asp:proj-stoch-subgrad}
\Tblue{For each player $i \in \playerN$, at each major iteration $k$ of Algorithm~\ref{alg:node-edge}, the following statements hold:\\
(i) (Unbiasedness) At each minor iteration $t$, there exists a $g^{(k)}_{i,t} \in \partial_{y^i_i}\hat{J}^{(k)}_i(y^{i(k+1)}_{i,t}; \xi^{(k)}_{i,t})$ such that $\expt{}{g^{(k)}_{i,t} \mid \sigma\{\mathcal{F}_k, \xi^{(k)}_{i, [t]}\}}$ is a.s. a subgradient of the expected-value augmented objective $\hat{\mathbb{J}}_i^{(k)}(\cdot)$ at $y^{i(k+1)}_{i,t}$, where $\xi^{(k)}_{i, [t]} \coloneqq \{ \xi^{(k)}_{i, 0}, \ldots, \xi^{(k)}_{i,t-1} \}$ with $\xi^{(k)}_{i, [0]} \coloneqq \varnothing$; \\
(ii) (Upper-bounded variance) For any $y^i_i \in \mathcal{X}^B_i$, there exists a $g^{(k)}_i \in \partial_{y^i_i}\hat{J}^{(k)}_i(y^i_i; \xi_i)$ such that  $\expt{}{\norm{g^{(k)}_{i}}^2_2 \mid \mathcal{F}_k} \leq \alpha_{g,i}^2\norm{\tilde{\psi}^{(k)}}^2_2 + \beta_{g,i}^2$ a.s. for some positive constants $\alpha_{g,i}$ and $\beta_{g,i}$.}
\end{assumption}

We refer the reader to the paragraph before Theorem~\ref{thm:main-convg-thm} for the definitions of the stack vector $\tilde{\psi}^{(k)}$ and the filtration $(\mathcal{F}_k)_{k \in \nset{}{}}$ as a reminder.
The proposed inexact solver for the first player for-loop of Algorithm~\ref{alg:node-edge} is given in Subroutine~\ref{alg:proj-stoch-subgrad}. 
\Tblue{Note that the DR splitting scheme ensures local feasibility with single projection onto local feasible sets, and requires multiple projections onto relaxed bounded box sets, which considerable reduces the computational complexity compared with other methods such as proximal-point scheme \cite{bianchi2022fast}.}

\begin{algorithm}
\SetAlgorithmName{Subroutine}{subroutine}{List of Subroutines}
\SetAlgoLined
\caption{\Tblue{Projected Stochastic Subgradient Solver}}
\label{alg:proj-stoch-subgrad}
\Tblue{\textbf{For each $i \in \playerN$, at the $k$-th major iteration of Alg.~\ref{alg:node-edge}:}} \\
\textbf{Initialize:} $y^{i(k+1)}_{i,0} \coloneqq \tilde{y}^{i(k)}_{i}; $\\
\For{$t = 0 \text{ to } T^{(k)}_i - 1$}{
$y^{i(k+1)}_{i,t+1} \coloneqq \proj_{\mathcal{X}^B_i}[y^{i(k+1)}_{i,t} - \kappa_{i,t} \cdot g^{(k)}_{i,t}]$, 
with $\kappa_{i,t} \coloneqq \frac{2\tau_{1i}}{t+2}$;\\
}
\textbf{Return:} $y^{i(k+1)}_{i} \coloneqq y^{i(k+1)}_{i, T^{(k)}_{i}}$.
\end{algorithm}

The following lemma discusses the convergence rate of Subroutine~\ref{alg:proj-stoch-subgrad} as a minor updating routine inside Algorithm~\ref{alg:node-edge}. 
We use $y^{i(k+1)}_{i, *}$ to denote the accurate minimizer of the expected-value augmented function $\hat{\mathbb{J}}^{(k)}_i(\cdot)$.

\begin{lemma}\label{le:proj-stoch-convg-rate}
Suppose Assumptions~\ref{asp:subgrad} to \ref{asp:proj-stoch-subgrad} hold. 
% For each player $i \in \playerN$, the step size at the $t$-th minor iteration is set to be $\kappa_{i,t} \coloneqq \frac{2\tau_{1i}}{t+2}$. 
Then, for any $T = 1, \ldots, T^{(k)}_{i}$, the distance between the approximate solution by Subroutine~\ref{alg:proj-stoch-subgrad} and the accurate solution satisfies $\expt{}{\norm{y^{i(k+1)}_{i,T} - y^{i(k+1)}_{i,*}}^2_2 \mid \mathcal{F}_k} \leq 4\tau_{1i}^2T^{-1}(\alpha^2_{g,i}\norm{\tilde{\psi}^{(k)}}^2_2 + \beta^2_{g,i})$ a.s. 
\end{lemma}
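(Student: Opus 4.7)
The plan is to reduce the claim to a one-step recursion on $b_t \coloneqq \expt{}{\norm{y^{i(k+1)}_{i,t} - y^{i(k+1)}_{i,*}}^2_2 \mid \mathcal{F}_k}$ and then solve it by induction on $t$. Throughout I would write $x_t \coloneqq y^{i(k+1)}_{i,t}$, $x_* \coloneqq y^{i(k+1)}_{i,*}$, $\mu \coloneqq 1/\tau_{1i}$, and $G^2 \coloneqq \alpha^2_{g,i}\norm{\tilde{\psi}^{(k)}}^2_2 + \beta^2_{g,i}$. The first observation is that the quadratic regularizer $\tfrac{1}{2\tau_{1i}}\norm{v_i - \tilde{y}^{i(k)}_i}^2_2$, combined with the convexity of $J_i(\cdot;y^{-i(k+1)}_i,\xi_i)$ from Assumption~\ref{asp:subgrad}, makes $\hat{\mathbb{J}}^{(k)}_i$ a $\mu$-strongly convex function on $\mathcal{X}^B_i$ whose unique minimizer is $x_*$.

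Next I would derive the one-step inequality. Since $x_* \in \mathcal{X}^B_i$, the nonexpansiveness of $\proj_{\mathcal{X}^B_i}$ gives
\begin{equation*}
\norm{x_{t+1} - x_*}^2_2 \le \norm{x_t - x_*}^2_2 - 2\kappa_{i,t}\langle g^{(k)}_{i,t}, x_t - x_*\rangle + \kappa_{i,t}^2\norm{g^{(k)}_{i,t}}^2_2.
\end{equation*}
Conditioning on $\mathcal{F}_{t,k} \coloneqq \sigma\{\mathcal{F}_k, \xi^{(k)}_{i,[t]}\}$ and invoking Assumption~\ref{asp:proj-stoch-subgrad}(i) turns the cross term into $-2\kappa_{i,t}\langle \hat{g}_t, x_t - x_*\rangle$ with $\hat{g}_t \in \partial \hat{\mathbb{J}}^{(k)}_i(x_t)$. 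Strong convexity at $x_*$ then gives $\langle \hat{g}_t, x_t - x_*\rangle \ge \hat{\mathbb{J}}^{(k)}_i(x_t) - \hat{\mathbb{J}}^{(k)}_i(x_*) + \tfrac{\mu}{2}\norm{x_t - x_*}^2_2 \ge \tfrac{\mu}{2}\norm{x_t - x_*}^2_2$, where the second inequality uses the optimality of $x_*$ on $\mathcal{X}^B_i$. Assumption~\ref{asp:proj-stoch-subgrad}(ii), applied at the $\mathcal{F}_{t,k}$-measurable iterate $x_t \in \mathcal{X}^B_i$ (using that $\xi^{(k)}_{i,t}$ is independent of $\xi^{(k)}_{i,[t]}$ given $\mathcal{F}_k$), bounds the noise term by $\kappa_{i,t}^2 G^2$. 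Taking a further conditional expectation with respect to $\mathcal{F}_k$ via the tower property and inserting $\kappa_{i,t} = 2\tau_{1i}/(t+2)$, $\mu = 1/\tau_{1i}$ produces the scalar recursion
\begin{equation*}
b_{t+1} \le \frac{t}{t+2}\, b_t + \frac{4\tau_{1i}^2 G^2}{(t+2)^2}.
\end{equation*}

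I would then close the argument by induction, proving $b_T \le 4\tau_{1i}^2 G^2 / T$ for every $T \ge 1$. For the base case $T = 1$, the coefficient $1 - \kappa_{i,0}\mu = 0$, so the recursion at $t = 0$ collapses to $b_1 \le \kappa_{i,0}^2 G^2 = \tau_{1i}^2 G^2 \le 4\tau_{1i}^2 G^2$. For the inductive step, substituting $b_t \le 4\tau_{1i}^2 G^2 / t$ into the recursion yields $b_{t+1} \le \tfrac{4\tau_{1i}^2 G^2}{t+2} + \tfrac{4\tau_{1i}^2 G^2}{(t+2)^2}$, and a short algebraic manipulation shows this is bounded by $4\tau_{1i}^2 G^2 / (t+1)$, closing the induction.

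The main obstacle I anticipate is choosing the right form of the strong-convexity inequality. The tighter bound $\langle \hat{g}_t, x_t - x_*\rangle \ge \mu\norm{x_t - x_*}^2_2$ (obtained via a two-point strong-convexity argument combined with the first-order optimality of $x_*$) gives the recursion coefficient $(t-2)/(t+2)$, which is negative for $t \in \{0,1\}$ and makes the induction awkward. Using the looser $\mu/2$ bound keeps the coefficient nonnegative throughout, and the specific choice $\kappa_{i,0} = \tau_{1i} = 1/\mu$ makes it vanish at $t = 0$, cleanly absorbing the dependence on the potentially large initial distance $\norm{\tilde{y}^{i(k)}_i - x_*}^2_2$ and making the base case trivial.
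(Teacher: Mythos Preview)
Your proposal is correct and follows essentially the same route as the paper: both derive the identical one-step recursion $b_{t+1}\le (1-\kappa_{i,t}/\tau_{1i})b_t+\kappa_{i,t}^2G^2$ via projection nonexpansiveness, the $\tfrac{\mu}{2}$ strong-convexity subgradient inequality, Assumption~\ref{asp:proj-stoch-subgrad}, and the tower property. The only cosmetic difference is that you close the recursion by induction, whereas the paper multiplies by $(t+1)/2$ and telescopes; both arguments exploit the same cancellation and yield the same bound.
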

\begin{proof}
See Appendix~\ref{pf:proj-stoch-convg-rate}.
\end{proof}

From Lemma~\ref{le:proj-stoch-convg-rate}, we can conclude that for each player $i \in \playerN$, after the $k$-th major iteration of Algorithm~\ref{alg:node-edge} where player $i$ implements $T^{(k)}_i$ projected stochastic subgradient steps in Subroutine~\ref{alg:proj-stoch-subgrad}, $\bexpt{}{\norm{y^{i(k+1)}_{i} - y^{i(k+1)}_{i,*}}^2_2 \mid \mathcal{F}_k} \leq \frac{(2\tau_{1i})^2}{T^{(k)}_i}(\alpha^2_{g,i}\norm{\tilde{\psi}^{(k)}}^2_2 + \beta^2_{g,i})$. 
Based on this result, it is straightforward to derive an upper bound for the approximate error $\varepsilon^{(k)} \coloneqq \norm{\rrcomp(\tilde{\psi}^{(k)}) - \rrcomp_*(\tilde{\psi}^{(k)})}_\pspace$. 
As will be shown later, this upper bound can be treated as a function of $\ubar{T}^{(k)} \coloneqq \min\{T^{(k)}_{i}: i \in \playerN\}$ which we can tune to provide a desirable sequence of approximation accuracies.

\begin{lemma}\label{le:convg-rate-augvec}
Consider $(\varepsilon^{(k)})_{k \in \nset{}{}}$ generated by Algorithm~\ref{alg:node-edge} using Subroutine~\ref{alg:proj-stoch-subgrad} as the inexact solver. 
Suppose Assumptions~\ref{asp:subgrad} to \ref{asp:proj-stoch-subgrad} hold. 
Then there exist some positive constants $\alpha_\psi$ and $\beta_\psi$ such that the following relation holds a.s.:
\begin{equation}
\bexpt{}{\varepsilon^{(k)} \mid \mathcal{F}_k} \leq (\ubar{T}^{(k)})^{-1/2}(\alpha_{\psi}\norm{\tilde{\psi}^{(k)}}_\pspace + \beta_\psi).
\end{equation}
\end{lemma}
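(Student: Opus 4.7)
The plan is to reduce $\varepsilon^{(k)}$ to a linear combination of the per-player subgradient-solver errors $\norm{y^{i(k+1)}_i - y^{i(k+1)}_{i,*}}_2$ for $i \in \playerN$, and then apply Lemma~\ref{le:proj-stoch-convg-rate} componentwise. First, I would exploit that $\optB = D + \mathcal{B}_y$ is maximally monotone (the symmetric part of $D$ is $\blkd{\rho_\mu L_n, \rho_z L_m, 0, 0} \succeq 0$, and $\mathcal{B}_y$ is a product of normal cones), so $J_{\poptB}$ is firmly nonexpansive and $R_{\poptB}$ is nonexpansive in the $\pspace$-inner-product norm. Combined with $R_{\appoptA} = 2J_{\appoptA} - \idty$ and the analogous identity for $R_{\poptA}$, this yields
\[
\varepsilon^{(k)} = \norm{R_{\poptB}R_{\appoptA}\tilde\psi^{(k)} - R_{\poptB}R_{\poptA}\tilde\psi^{(k)}}_\pspace \leq 2\norm{J_{\appoptA}\tilde\psi^{(k)} - J_{\poptA}\tilde\psi^{(k)}}_\pspace,
\]
so it suffices to bound the first-half-iteration error $\norm{\psi^{(k+1)} - \psi^{(k+1)}_*}_\pspace$, where $\psi^{(k+1)} \coloneqq J_{\appoptA}\tilde\psi^{(k)}$ is the output of the first player and edge for-loops of Algorithm~\ref{alg:node-edge} and $\psi^{(k+1)}_*$ is its exact counterpart.

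Second, I would trace the approximation through those two for-loops, noting that the only inexact update is the subroutine-based computation of $y^{i(k+1)}_i$. Writing $e^y_i \coloneqq y^{i(k+1)}_i - y^{i(k+1)}_{i,*}$, every other component of $\psi^{(k+1)} - \psi^{(k+1)}_*$ is an affine image of $\{e^y_i\}_{i \in \playerN}$: the local-estimate block $y^{-i(k+1)}_i$ depends only on $\tilde\psi^{(k)}$, so it is exact; the $\lambda$-update line gives $\lambda^{(k+1)}_i - \lambda^{(k+1)}_{i,*} = \tau_{2i} A_i e^y_i$; and the first edge-loop updates then show that $\mu^{(k+1)}_{ji}$ (resp.\ $z^{(k+1)}_{ji}$) differs from its exact counterpart by a linear combination of $e^y_i, e^y_j$ (resp.\ of $\tau_{2i}A_i e^y_i, \tau_{2j}A_j e^y_j$), via $\hat y = 2y - \tilde y$ and $\hat\lambda = 2\lambda - \tilde\lambda$. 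Collecting these bounds by the triangle inequality and converting from $\norm{\cdot}_2$ to $\norm{\cdot}_\pspace$ using $\lambda_{\min}(\Phi)^{1/2}$ and $\lambda_{\max}(\Phi)^{1/2}$ gives $\norm{\psi^{(k+1)} - \psi^{(k+1)}_*}_\pspace \leq C_1 \sum_{i \in \playerN} \norm{e^y_i}_2$ for a constant $C_1$ that depends on $\btau$, $\{\norm{A_i}\}_{i \in \playerN}$, the incidence matrix $B$, $\rho_\mu, \rho_z$, and $\Phi$.

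Third, I would invoke Lemma~\ref{le:proj-stoch-convg-rate} with $T = T^{(k)}_i$, together with Jensen's inequality $\bexpt{}{X \mid \mathcal{F}_k} \leq \sqrt{\bexpt{}{X^2 \mid \mathcal{F}_k}}$ and the elementary bound $\sqrt{a^2 + b^2} \leq a + b$ for nonnegative $a,b$, to obtain
\[
\bexpt{}{\norm{e^y_i}_2 \mid \mathcal{F}_k} \leq \frac{2\tau_{1i}}{\sqrt{T^{(k)}_i}}\big(\alpha_{g,i}\norm{\tilde\psi^{(k)}}_2 + \beta_{g,i}\big).
\]
Summing over $i$, using $T^{(k)}_i \geq \ubar{T}^{(k)}$, and finally converting to the $\pspace$-norm on the right via $\norm{\tilde\psi^{(k)}}_2 \leq \lambda_{\min}(\Phi)^{-1/2}\norm{\tilde\psi^{(k)}}_\pspace$, yields the claimed inequality, with $\alpha_\psi$ and $\beta_\psi$ collecting all residual constants. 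The main obstacle is the bookkeeping in the second step: one must verify that the chained updates for $\mu_{ji}$ and $z_{ji}$ introduce only terms absorbable into the linear-in-$\norm{e^y_i}_2$ bound. What makes this tractable is the block-sparse structure of $\hat y^{(k+1)}_i - \hat y^{(k+1)}_{i,*}$ (nonzero only in its $i$-th block of size $n_i$) and the linear propagation of $\hat\lambda^{(k+1)}_i - \hat\lambda^{(k+1)}_{i,*}$ through $A_i$; once these are established, everything else is routine norm manipulation.
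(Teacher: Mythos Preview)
Your proposal is correct and follows essentially the same route as the paper's proof: both use the nonexpansiveness of $R_{\poptB}$ in $\pspace$ to reduce $\varepsilon^{(k)}$ to $2\norm{\psi^{(k+1)}-\psi^{(k+1)}_*}_\pspace$, trace the first-half iteration to see that all dual-variable errors are linear images of the per-player decision errors $e^y_i$, invoke Lemma~\ref{le:proj-stoch-convg-rate} with Jensen's inequality, and convert between $\norm{\cdot}_2$ and $\norm{\cdot}_\pspace$ via the extreme eigenvalues of $\Phi$. The only cosmetic differences are that the paper aggregates the $\by$-error as $\norm{\by^{(k+1)}-\by^{(k+1)}_*}_2$ and bounds the $\blambda$-block via the operator norm $\norm{\btau_2\diagA\mathcal{R}}_2$ rather than writing out $\tau_{2i}A_i e^y_i$ per player, and a minor slip in your sketch (the symmetric part of $D$ carries $\tfrac{\rho_\mu}{2}L_n$, not $\rho_\mu L_n$) which is immaterial to the argument.
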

\begin{proof}
See Appendix~\ref{pf:convg-rate-augvec}.
\end{proof}

Lemma~\ref{le:convg-rate-augvec} establishes the relationship between the approximate error $\varepsilon^{(k)}$ and the stack vector $\tilde{\psi}^{(k)}$ at each major iteration $k$. 
We define $\gamma^{(k)}_T \coloneqq \gamma^{(k)}(\ubar{T}^{(k)})^{-1/2}$. 
From Theorem~\ref{thm:main-convg-thm}, it suffices to have the sequence $(\gamma^{(k)}_T)_{k \in \nset{}{}}$ summable and $(\norm{\tilde{\psi}^{(k)}})_{k \in \nset{}{}}$ bounded. 
To this end, we next focus on proving the conditions needed to guarantee the boundedness of $(\tilde{\psi}^{(k)})_{k \in \nset{}{}}$ and finally derive the sufficient conditions to ensure the convergence of Algorithm~\ref{alg:node-edge}. 

\begin{theorem}\label{thm:summb}
Consider the sequence $(\tilde{\psi}^{(k)})_{k \in \nset{}{}}$ generated by Algorithm~\ref{alg:node-edge} using Subroutine~\ref{alg:proj-stoch-subgrad} as an inexact solver. 
Suppose Assumptions \ref{asp:subgrad} to \ref{asp:proj-stoch-subgrad} hold. 
In addition, the sequence $(\supsub{\gamma}{(k)}{})_{k \in \nset{}{}}$ satisfies $0 \leq \gamma^{(k)} \leq 1$ and $\sum_{k \in \nset{}{}}\gamma^{(k)}(1 - \gamma^{(k)}) = +\infty$, and the sequence $(\gamma^{(k)}_T)_{k \in \nset{}{}}$ is absolutely summable. 
Then $(\norm{\tilde{\psi}^{(k)}}_\pspace)_{k \in \nset{}{}}$ is bounded a.s., and 
$\sum_{k \in \nset{}{}} \gamma^{(k)}\expt{}{\varepsilon^{(k)} \mid \mathcal{F}_k} < \infty$ a.s. 
\Tblue{As a result, the sequence $(\tilde{\psi}^{(k)})_{k \in \nset{}{}}$ will converge to a fixed point of $\mathscr{R}_*$ and the associated sequence $(y^{(k)})_{k \in \nset{}{}}$ generated by $J_{\poptA}(\tilde{\psi}^{(k)})$ will converge to a v-SGNE of the problem \eqref{eq:pf-optprob}. }
\end{theorem}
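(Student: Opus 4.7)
The plan is to reduce Theorem~\ref{thm:summb} to the two hypotheses of Theorem~\ref{thm:main-convg-thm} by invoking the Robbins--Siegmund theorem on a Lyapunov sequence built around a fixed point of $\mathscr{R}_*$. By Theorem~\ref{thm:zerokkt} together with Assumption~\ref{asp:ne-exist}, $\fix{\mathscr{R}_*}$ is nonempty; fix $\tilde{\psi}^* \in \fix{\mathscr{R}_*}$ and set $V^{(k)} \coloneqq \norm{\tilde{\psi}^{(k)} - \tilde{\psi}^*}_\pspace^2$. The strategy is to show that $V^{(k)}$ converges a.s., which immediately yields condition (i) of Theorem~\ref{thm:main-convg-thm} (a.s.\ boundedness of $(\norm{\tilde{\psi}^{(k)}}_\pspace)$), and then to use that boundedness together with Lemma~\ref{le:convg-rate-augvec} to verify condition (ii).

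To derive the recursion, I would write $\tilde{\psi}^{(k+1)} = (1-\gamma^{(k)})\tilde{\psi}^{(k)} + \gamma^{(k)}\mathscr{R}_*(\tilde{\psi}^{(k)}) + \gamma^{(k)}\epsilon^{(k)}$ and expand $V^{(k+1)}$. The noiseless piece is bounded by the convexity identity together with the quasinonexpansiveness of $\mathscr{R}_*$ in $\pspace$ established in \cite[Thm.~2-3]{huang2021distributed}, producing $V^{(k)} - \gamma^{(k)}(1-\gamma^{(k)})\res{\tilde{\psi}^{(k)}}^2$. The linear cross term is controlled by Cauchy--Schwarz, the pointwise inequality $\norm{\tilde{\psi}^{(k)}}_\pspace \leq \sqrt{V^{(k)}} + \norm{\tilde{\psi}^*}_\pspace$, and Young's inequality, while the quadratic term $(\gamma^{(k)})^2 \norm{\epsilon^{(k)}}_\pspace^2$ calls for an $L^2$ analogue of Lemma~\ref{le:convg-rate-augvec}. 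That analogue is obtained by applying Lemma~\ref{le:proj-stoch-convg-rate} coordinate-wise and recording that the composition inside $\mathscr{R} - \mathscr{R}_*$ is Lipschitz in $\pspace$ with a deterministic constant determined by $\Phi$ and the graph data. Taking $\bexpt{}{\cdot \mid \mathcal{F}_k}$ and assembling the pieces yields a recursion of the form $\bexpt{}{V^{(k+1)} \mid \mathcal{F}_k} \leq (1 + A_k)V^{(k)} - B_k + C_k$, where $A_k$ and $C_k$ are positive multiples of $\gamma^{(k)}_T$ and $(\gamma^{(k)}_T)^2$ (both summable, since absolute summability plus eventual boundedness by $1$ implies square summability) and $B_k = \gamma^{(k)}(1-\gamma^{(k)})\res{\tilde{\psi}^{(k)}}^2 \geq 0$.

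Robbins--Siegmund then implies that $V^{(k)}$ converges a.s., which forces $(\norm{\tilde{\psi}^{(k)}}_\pspace)$ to be a.s.\ bounded by some finite random variable $M$. Inserting this bound back into Lemma~\ref{le:convg-rate-augvec} gives $\bexpt{}{\varepsilon^{(k)} \mid \mathcal{F}_k} \leq (\alpha_\psi M + \beta_\psi)(\ubar{T}^{(k)})^{-1/2}$ a.s., hence $\sum_{k \in \nset{}{}} \gamma^{(k)} \bexpt{}{\varepsilon^{(k)} \mid \mathcal{F}_k} \leq (\alpha_\psi M + \beta_\psi) \sum_{k \in \nset{}{}} \gamma^{(k)}_T < \infty$ a.s., which is exactly condition (ii). The final convergence statement then follows from a direct appeal to Theorem~\ref{thm:main-convg-thm}. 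The main obstacle I foresee is the second-moment control on $\epsilon^{(k)}$: the stated Lemma~\ref{le:convg-rate-augvec} supplies only an $L^1$ bound, so a short auxiliary estimate is needed, but this follows the same template as the proof of Lemma~\ref{le:convg-rate-augvec} once the Lipschitz continuity of the ingredients of $R_{\poptB}$ in $\pspace$ is recorded; everything else is routine book-keeping of constants that depend on $\Phi$, the Laplacian $L$, and the coupling matrices $A_i$.
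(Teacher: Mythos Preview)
Your plan is correct and leads to the same conclusion, but the paper takes a simpler route that sidesteps the main obstacle you flag. Instead of the squared Lyapunov $V^{(k)} = \norm{\tilde{\psi}^{(k)} - \tilde{\psi}^*}_\pspace^2$, the paper works directly with the unsquared distance $\norm{\tilde{\psi}^{(k)} - \tilde{\psi}^*}_\pspace$: one triangle inequality splits $\norm{\tilde{\psi}^{(k+1)} - \tilde{\psi}^*}_\pspace$ into $\gamma^{(k)}\varepsilon^{(k)}$ plus $\norm{\mathscr{P}_*(\tilde{\psi}^{(k)}) - \mathscr{P}_*(\tilde{\psi}^*)}_\pspace$, the latter is bounded by $\norm{\tilde{\psi}^{(k)} - \tilde{\psi}^*}_\pspace$ via quasinonexpansiveness, and Lemma~\ref{le:convg-rate-augvec} (the $L^1$ bound you already have) handles the former after the substitution $\norm{\tilde{\psi}^{(k)}}_\pspace \le \norm{\tilde{\psi}^{(k)} - \tilde{\psi}^*}_\pspace + \norm{\tilde{\psi}^*}_\pspace$. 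This yields the one-line recursion $\expt{}{\norm{\tilde{\psi}^{(k+1)} - \tilde{\psi}^*}_\pspace \mid \mathcal{F}_k} \le (1+\alpha_\psi\gamma^{(k)}_T)\norm{\tilde{\psi}^{(k)} - \tilde{\psi}^*}_\pspace + \gamma^{(k)}_T(\alpha_\psi\norm{\tilde{\psi}^*}_\pspace+\beta_\psi)$, to which Robbins--Siegmund applies directly. The advantage is that no $L^2$ analogue of Lemma~\ref{le:convg-rate-augvec} is needed, and neither Young's inequality nor square-summability of $(\gamma^{(k)}_T)$ enters. What your squared-norm expansion buys is the extra $-\gamma^{(k)}(1-\gamma^{(k)})\res{\tilde{\psi}^{(k)}}^2$ term, but since you invoke Theorem~\ref{thm:main-convg-thm} afterwards anyway, that gain is not used; the paper simply defers that computation to the proof of Theorem~\ref{thm:main-convg-thm}. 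Either argument is valid; the paper's is shorter.
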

\begin{proof}
See Appendix~\ref{pf:summb}.
\end{proof}

\Tblue{Based on Theorem~\ref{thm:summb}, to get a solution arbitrarily closed to a v-SGNE, we will run Algorithm~\ref{alg:node-edge} for a sufficiently large number of major iterations. 
Then, we use the obtained last-iterate $\tilde{\psi}^{(k)}$ to run Subroutine~\ref{alg:proj-stoch-subgrad} for another sufficiently large number of minor iterations $\ubar{T}^{(k)}$ \cite[Lemma~3]{huang2021sdistributed}.}
To ensure the convergence of Algorithm~\ref{alg:node-edge}, it suffices to properly choose $(\gamma^{(k)})_{k \in \nset{}{}}$ and $(\ubar{T}^{(k)})_{k \in \nset{}{}}$ such that $(\gamma^{(k)}_{T})_{k \in \nset{}{}}$ is a summable sequence.
\Tblue{As an example for admissible parameters, we can choose $\gamma^{(k)} = 1/k^a$ and $\ubar{T}^{(k)} = k^b$, with $0 < a \leq 1$ and $a+b/2 > 1$.} 
This can be manipulated to make the proposed algorithm work under different practical settings. 
For instance, if these players are working in a feedback-parsimonious setting, i.e., the available realizations of noisy first-order/gradient information per iteration are scarce, one can choose a faster decaying rate for $(\gamma^{(k)})_{k \in \nset{}{}}$ as long as $\sum_{k \in \nset{}{}} \gamma^{(k)} = +\infty$ and let $(\ubar{T}^{(k)})_{k \in \nset{}{}}$ grow linearly or even sublinearly. 
In contrast, if the available realizations are abundant, one can let $(\ubar{T}^{(k)})_{k \in \nset{}{}}$ grow superlinearly while fixing $\gamma^{(k)}$ to be some constant such that the proposed algorithm can enjoy a faster convergence rate.

\section{Case Study and Numerical Simulations}\label{sect:simu}

\subsection{Stochastic Nash-Cournot Distribution Game}\label{subsect:nash-cournot-dist}

We evaluate the performance of the proposed algorithm with a Nash Cournot distribution problem \cite[Sec.~1.4.3]{facchinei2007finite}\cite{parise2019distributed} over a transport network. 
Several firms (indexed by $\playerN \coloneqq \{1, \ldots, N\}$), who produce a common homogeneous commodity, participate in this game.
These firms try to optimize their own payoffs by deciding the quantity of the commodity to produce at each factory and the quantities to distribute to different markets. 
A transport network is provided, with markets as the nodes and roads as the edges. 
Let $\mathcal{N}_T$ denote the node set of this network and $\mathcal{E}_T$ the edge set, distinguished from $\mathcal{N}_g$ and $\mathcal{E}_g$ of the underlying communication network $\mathcal{G}$. 
Denote cardinalities of $\mathcal{N}_T$ and $\mathcal{E}_T$ by $N_T$ and $E_T$, and the incident matrix of this transport network by $B_T \in \rset{N_T \times E_T}{}$. 

Each firm has $N_{T_i}$ factories at certain nodes on this transport network, given by the set $\mathcal{N}_{T_i}$. 
Its decision vector $x_i \in \rset{E_T + N_{T_i}}{}$ is comprised of two parts ($x_i \coloneqq [u_i; v_i]$): 
each entry of $u_i \in \rset{E_T}{+}$ represents the quantity of the commodity delivered through a road in $\mathcal{E}_T$; 
each entry of $v_i \in \rset{N_{T_i}}{+}$ represents the quantity of the commodity produced by one of its factories in $\mathcal{N}_{T_i}$. 
The indicator matrix which maps from each entry of $v_i$ to the corresponding node on the transport network is denoted by $E_i \in \rset{N_T \times N_{T_i}}{}$, and we let $A_i \coloneqq [B_T, E_i]$. 
These two parts ($u_i$ and $v_i$) together uniquely determine the distribution of commodity $A_ix_i$ over the markets. 
If we assume that the factories owned by firm $i$ have maximum production capacities $b_i \in \rset{N_{T_i}}{++}$, then each entry of the vector $u_i \in \rset{E_T}{}$ is upper-bounded by $\norm{b_i}_1$, and the local feasible set $\mathcal{X}_i$ is a polytope which can be written as: 
$\mathcal{X}_i \coloneqq \{x_i \in \rset{E_T + N_{T_i}}{} \mid 0 \leq v_i \leq b_i, 0 \leq u_i \leq \norm{b_i}_1 \otimes \bone_{E_T}, A_ix_i \geq 0\}$.
\Tblue{The objective function of each firm $i$ is given by:
$J_i(x_i ;x_{-i}, \xi_i) = x_i^TQ_ix_i + C_t(u_i) + C^i_p(v_i) - (P(Ax) + \xi_i)^TA_ix_i$, 
where $Q_i \in \pdset{E_T + N_{T_i}}{++}$, $A \coloneqq [A_1, \ldots, A_N]$, $x \coloneqq [x_1; \ldots; x_N] \in \rset{n}{}$ with $n \coloneqq NE_T + \sum_{i \in \playerN}N_{T_i}$, and $P(Ax) \coloneqq w - \Sigma Ax$ maps from the total quantities $Ax$ of the commodity at markets to their unit prices with $w \in \rset{N_T}{+}$ and $\Sigma \in \pdset{N_T}{++}$.} 
The transport cost $C_t$ is defined as the sum of the costs at all roads, i.e., $C_t(u_i) \coloneqq \sum_{k \in \mathcal{E}_T}C^k_t([u_i]_k)$, where each road $k \in \mathcal{E}_T$ has $C^k_t([u_i]_k) \coloneqq \eta_k([u_i]_k - (1 - \frac{1}{1 + [u_i]_k}))$. 
The production cost $C^i_p$ is also defined as the sum of the costs at all factories, i.e., $C^i_p(v_i) \coloneqq \sum_{k \in \mathcal{N}_{T_i}} C^{i,k}_p(v_i)$, where each factory $k \in \mathcal{N}_{T_i}$ has $C^{i,k}_p([v_i]_k) \coloneqq \kappa_{i,k}([v_i]_k-(1 - \frac{1}{1 + [v_i]_k}))$. 
The total income $(P(Ax) + \xi_i)^TA_ix_i$ captures uncertainty in the unit prices through the random vector $\xi_i$, which has its entries independently identically distributed with mean zero. 

Furthermore, we assume that each market has a maximum capacity for the commodity, and the decision vectors of the players should collectively satisfy the global resource constraints $\sum_{i \in \playerN} A_ix_i \leq c$ where $c \in \rset{N_T}{++}$. 
Building on the discussed setups, each firm $i \in \playerN$, given the production and distribution strategies of the other players ($x_{-i}$), aims to solve the following stochastic optimization problem:
\begin{equation}
\begin{cases}
\minimize_{x_i \in \mathcal{X}_i} & \expt{\xi_i}{J_i(x_i; x_{-i}, \xi_i)} \\
\subj & A_ix_i \leq c - \sum_{j \in \neighbN{}{-i}}A_jx_j. 
\end{cases}
\end{equation}

\subsubsection{Assumptions Verification}

We use the transport network of the city of Oldenburg \cite{brinkhoff2002framework} (Fig.~\ref{fig:nash-dist-ne} top): it consists of $N_T = 29$ nodes (markets) and $E_T = 2 \times 34$ directed edges (roads). 
Five firms ($N = 5$) participates in this game, and each firm has a single factory at a given location/node $\{8, 14, 21, 10, 29\}$. 
\Tblue{Each factory has its maximum production capacity uniformly sampled from the interval $[10, 14]$, and $Q_i$ is a diagonal matrix with the diagonal entries uniformly sampled from $[2, 3]$}. 
In the transport costs, we have $\frac{1}{8}\eta_{k} \in (0, 1]$ being the ratio between the length of road $k$ and the maximum length of the roads in $\mathcal{E}_T$. 
In the production costs, we fix the coefficients $\kappa_{i,k} = 2$. 
In the price function $P(\cdot)$, we draw each entry of the vector $w$ uniformly at random from the interval $[26, 30]$ and set the matrix $\Sigma$ to have $[\Sigma]_{ii} \coloneqq 1$ for all $i \in \mathcal{N}_T$ and $[\Sigma]_{ji} \coloneqq 0.3 \cdot (1 - \frac{1}{8}\eta_{(j,i)})$ for all $(j, i) \in \mathcal{E}_T$. 

For each player $i \in \playerN$, it is easy to check by definition that $J_i(x_i; x_{-i}, \xi_i)$ and $\mathbb{J}_i(x_i; x_{-i})$ are smooth and proper, and they are convex in $x_i$. 
Moreover, the pseudogradient $\mathbb{F}$ is strongly monotone on the local compact feasible sets $\prod_{i \in \playerN} \mathcal{X}_i$ (detailed verifications are omitted due to space limit). 
Then by \cite[Thm.~2.3.3]{facchinei2007finite}, this problem admits a unique v-SGNE. 
We set the communication graph of the players to be composed of an undirected circle plus two randomly selected edges.
Therefore, Assumptions~\ref{asp:subgrad} to \ref{asp:commtopo} are fulfilled. 
\Tblue{We choose $\rho_\mu = 8$ and then appropriately set the step sizes to be $\btau_1 = 0.0285 \otimes I_{Nn}$, $\btau_2 = 0.09 \otimes I_{Nm}$, $\btau_3 = 0.5 \otimes I_{En}$, and $\btau_4 = 0.5 \otimes I_{Em}$. 
It can be checked numerically that the conditions in Assumptions~\ref{asp:convg} and \ref{asp:phi-pd} are satisfied. 
We further set $\xi_i \sim U[-2, 2]$ and can easily verify the conditions in Assumption~\ref{asp:proj-stoch-subgrad}.} 
% It can be verified that the gradient $g^{(k)}_{i,t}$ of the scenario-based augmented objective $\hat{J}^{(k)}_i$ is an unbiased estimate of the gradient of the expected-valued objective $\hat{\mathbb{J}}^{(k)}_i$, i.e., $\expt{}{g^{(k)}_{i,t} \mid \sigma\{\mathcal{F}_k, \xi^{(k)}_{i, [t]}\}} = \nabla_{y^i_i} \hat{\mathbb{J}}^{(k)}_i(y^{i(k+1)}_{i,t})$. 
% In addition, the local feasible set $\mathcal{X}_i$ of each player is compact, and the vector $y^{-i(k+1)}_i$ and $\tilde{\varphi}^{(k)}$ can be written as some linear transformations of $\tilde{\psi}^{(k)}$, by using the inequality $(\sum_{i=1}^{n}x_i)^2 \leq \sum_{i=1}^{n}nx_i^2$, we can establish $\expt{}{\norm{g^{(k)}_{i,t}}^2_2 \mid \mathcal{F}_k} \leq \alpha_{g,i}^2\norm{\tilde{\psi}^{(k)}}^2_2 + \beta^2_{g,i}$ for some constants $\alpha_{g,i}$ and $\beta_{g,i}$, which fulfills the conditions in Assumption~\ref{asp:proj-stoch-subgrad}. 

\subsubsection{Simulation Results}\label{subsubsect:nash-dist-result}

\Tblue{The sequence $(\gamma^{(k)})_{k \in \nset{}{}}$ is fixed to be $1/2$, and the subgradient steps taken is chosen as $T^{(k)} = \ceil{10^{-4}k^{2.1}} + 20$. 
We compare the performance of Algorithm~\ref{alg:node-edge} with that of \cite{franci2020stochastic}, with $c=4$ and the relaxed step sizes chosen as $0.04$.
Theses step sizes are empirically pushed to near the upper limit of convergence; otherwise, \cite[Lemma~6]{pavel2019distributed} suggests a set of miniscule and conservative step sizes ($\approx 3\times 10^{-5}$).} 
The performances of the proposed algorithm are shown in Fig.~\ref{fig:nash-cournot-dist}. 
We use the thick and semi-transparent lines to illustrate the real fluctuations of the metrics throughout the iterations, while using the thin lines to exhibit the simple moving averages of the metrics with a window size of $30$. 
The averages of the normalized distances to the v-SGNE are presented in Fig.~\ref{fig:nash-cournot-dist}(a), where the unique v-SGNE is calculated using the centralized method from \cite{belgioioso2018projected}.  
Note that $y^{(k)}_j$ denotes the stack of player $j$'s local decision and local estimates at the $k$-th iteration, and $y^*$ the v-SGNE of the game.
Fig.~\ref{fig:nash-cournot-dist}(b) shows the relative lengths of the updating step at each iteration. 
Let $\bar{y}^{(k)} \coloneqq \frac{1}{N}\sum_{j \in \playerN}y^{(k)}_j$. 
Fig.~\ref{fig:nash-cournot-dist}(c) exhibits how the sums of the standard deviations of the local estimates $\{y_j\}$, i.e., $\sum_{\ell=1}^{n}(\frac{1}{N}\sum_{j \in \playerN}([y^{(k)}_j]_\ell - [\Bar{y}^{(k)}]_\ell)^2)^{\frac{1}{2}}$, evolve over the iterations. It measures the level of consensus among different local estimates $y_j$. Fig.~\ref{fig:nash-cournot-dist}(d) is almost the same as Fig.~\ref{fig:nash-cournot-dist}(c) except that we are now investigating the consensus of local dual variables $\{\lambda_j\}$. 
The computed v-SGNE of this problem is illustrated in Fig.~\ref{fig:nash-dist-ne}, where we use five different colors to represent the different players/firms. 
The top panel includes a geographic illustration, with the locations of the factories denoted by the colored letters and the total quantities transported on the roads illustrated by the brightness of the edges. 
The bottom panel shows the commodity contributions from the players at each market on this transport network. 

\begin{figure}
    \centering
    \includegraphics[width=0.42\textwidth]{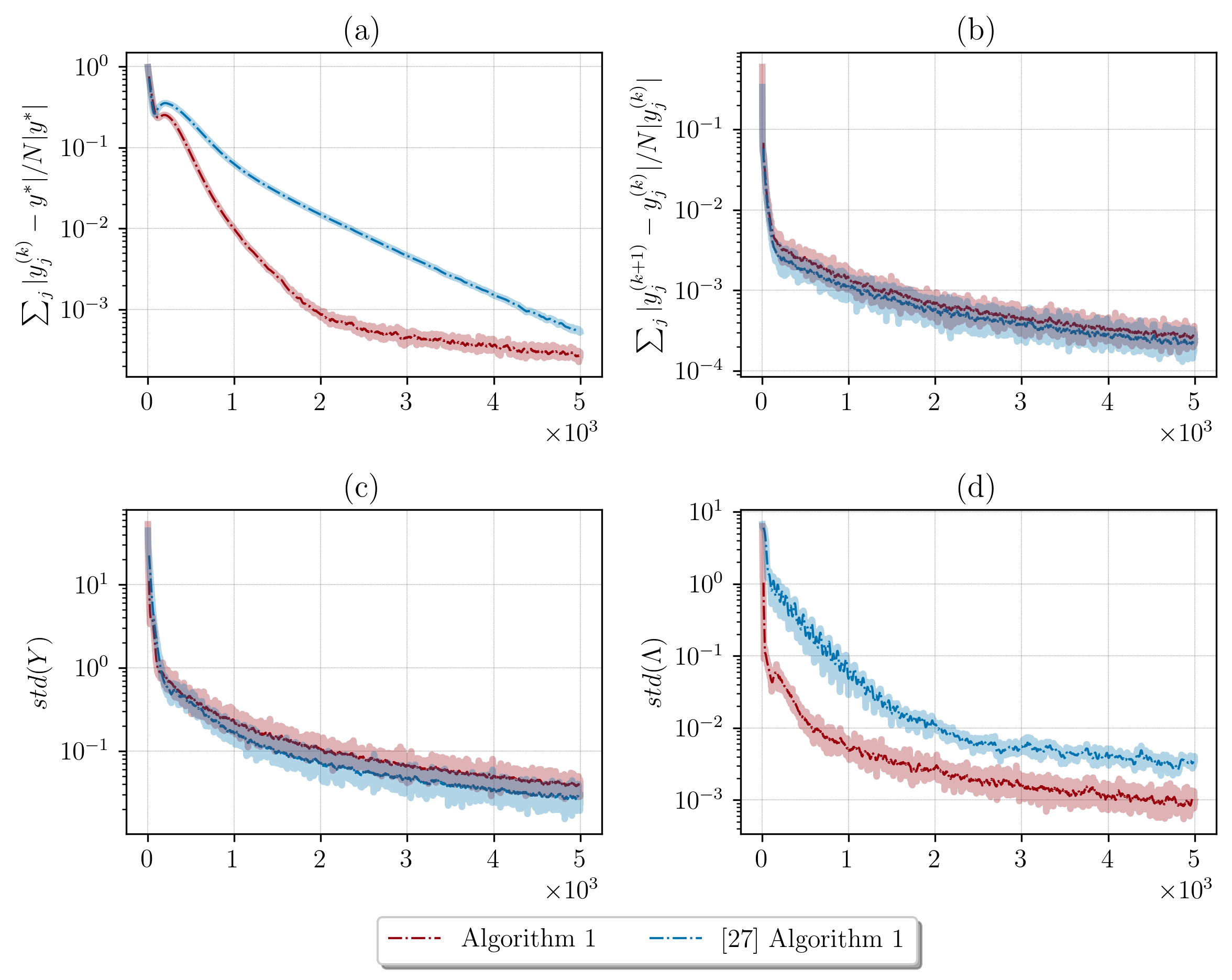}
    \caption{Performances of Alg.~\ref{alg:node-edge} in a Nash-Cournot Game}
    \label{fig:nash-cournot-dist}
\end{figure}

\begin{figure}
    \centering
    \includegraphics[width=0.4\textwidth]{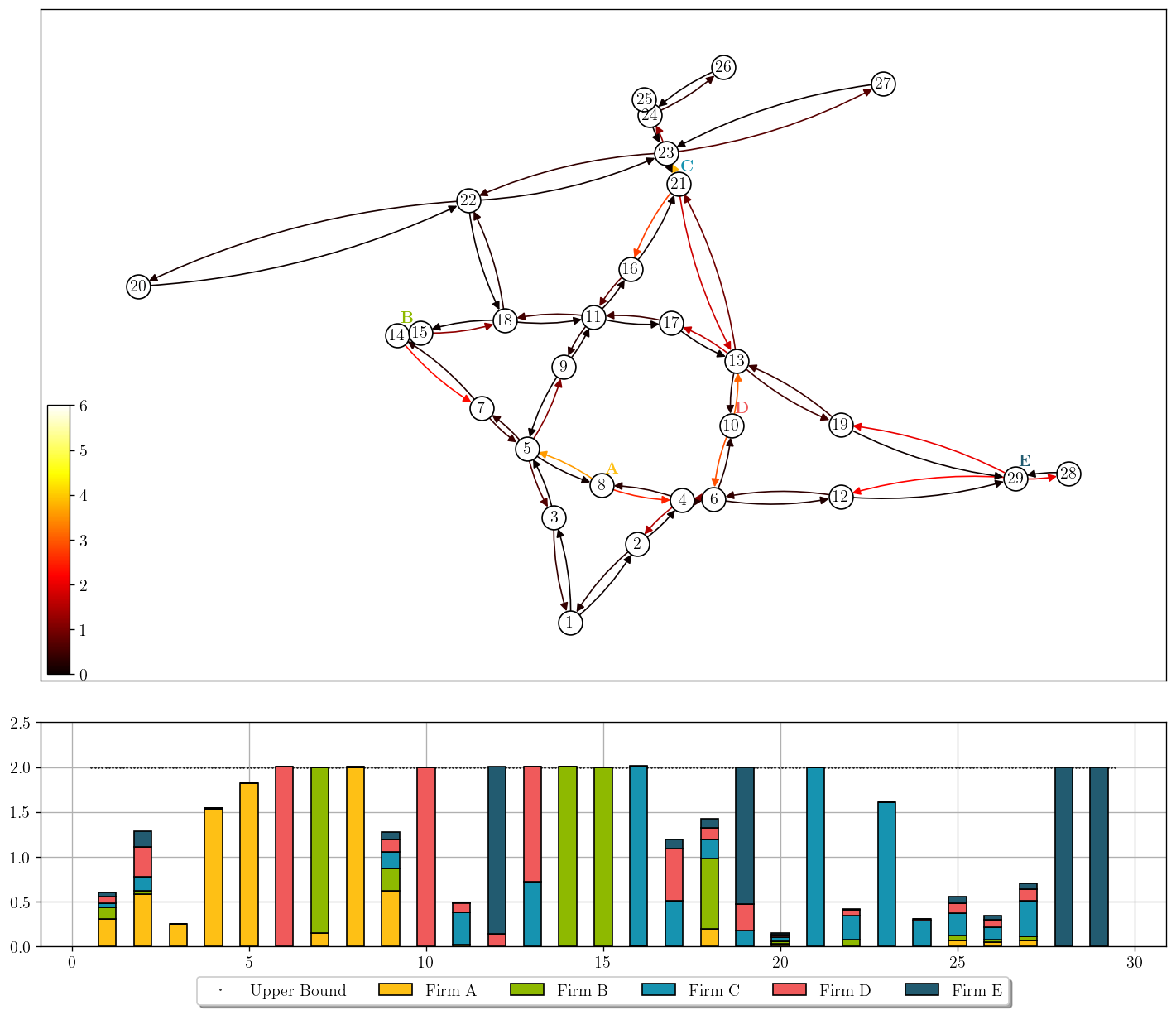}
    \caption{The v-SGNE Obtained by Alg.~\ref{alg:node-edge}}
    \label{fig:nash-dist-ne}
\end{figure}

\subsection{Multi-Product Assembly Game with the Two-Stage Model}

The two-stage stochastic programming problem originated from the work of \cite{dantzig1955linear} and found its applications in fields such as financial planning and control \cite[Sec.~1.2]{birge2011introduction}, investment in power plants \cite[Sec.~1.3]{birge2011introduction}, transportation planning during emergency response \cite{barbarosoglu2004two}, etc.
In this paper, we consider a multi-product assembly problem using the two-stage model \cite[Sec.~1.3.1]{shapiro2014lectures}. 
In a game network with $N$ manufacturers/players indexed by $\playerN \coloneqq \{1, \ldots, N\}$, each player $i$ produces $\ell_i$ types of commodities. 
There are in total $m$ different subassemblies which have to be ordered from a third-party vendor. 
For each player $i$, it needs $n_i$ different types of subassemblies in total, and a unit of commodity $j$ requires $h_{i, (j,v)}$ units of subassembly $v$, where $j=1, \ldots, \ell_i$ and $v = 1, \ldots, n_i$. 
The demands for player $i$'s commodities are modeled as a random vector $D_{i} \coloneqq [D_{i, 1}; \cdots; D_{i, \ell_i}]$, which has its range $\mathcal{D}_i$ inside a bounded set in the positive orthant. 

We start by formulating the second-stage problem. 
Let the numbers of subassemblies ordered by player $i$ be denoted by $x_{i} \in \rset{n_i}{+}$, which is treated as a parameter in the second-stage problem. 
In this stage, player $i$ makes a production plan about the quantity of each commodity to produce based on the realized demand vector $d_i \in \rset{\ell_i}{+}$. 
This production plan should maximize the profit and at the same time not exceed the quantities of available subassemblies. 
The income of player $i$ is comprised of the unit selling prices of the commodities $p_{i} \in \rset{\ell_i}{}$ and the unit salvage values of subassemblies that are not used $s \in \rset{m}{}$. 
Denote the numbers of produced units by $z_{i} \in \rset{\ell_i}{+}$, and the numbers of subassemblies left in inventory by $y_{i} \in \rset{n_i}{+}$. 
We introduce the matrix $H_i \in \rset{\ell_i \times n_i}{}$ with each entry $[H_i]_{(j, v)} = h_{i, (j,v)}$ and a binary matrix $A_i \in \rset{m \times n_i}{}$ mapping each entry of $y_{i}$ to one among the $m$ subassemblies. 
In addition, assume the full-row-rank matrix $H_i$ has $\ell_i \leq n_i$ and no column sums to zero. 
Then we can define the nonsmooth function $\mathcal{Q}_i(x_i; d_i) = \min\{-p_i^Tz_i-s^TA_iy_i \mid  y_i = x_i - H_i^Tz_i, \bzero \leq z_i \leq d_i, y_i \geq \bzero\}$, the minimizer of which is the best production plan. 

With $\mathcal{Q}_i(x_i; d_i)$ defined, we can then formulate the first-stage problem. 
The price of subassembly $v$ per unit consists of the base cost $C_v$ which is a random variable and the additional cost with the increasing ratio $[\Sigma]_{(\nu,\nu)}$ per ordered unit. 
At this stage, when making decisions about the pre-order quantities $x_i$ to maximize the profit, each player $i$ is uncertain about the base prices of subassemblies and the demands for its commodities. 
% Hence, unlike the second stage where these players solve a deterministic problem, they now need to solve a stochastic programming problem. 
Each player $i$ has an expected-value objective w.r.t. the random vectors $C\coloneqq [C_\nu]_{\nu = 1, \ldots, m}$ and $D_i$. 
Moreover, their decisions should collectively satisfy the global constraints concerning the available subassemblies. 
Altogether, the first-stage problem for each player $i$ can be expressed as:
\begin{equation}\label{eq:1st-stage}
\begin{cases}
\underset{x_i \in \mathcal{X}_i}{\minimize} & \expt{}{\frac{1}{2}x_i^TQ_ix_i + (C + \Sigma Ax)^TA_ix_i + \mathcal{Q}_i(x_i; D_i)} \\
\subj & A_ix_i \leq c - \sum_{j \in \neighbN{}{-i}}A_jx_j,
\end{cases}
\end{equation}
where $A \coloneqq [A_1, \ldots, A_N]$, $x \coloneqq [x_1; \ldots; x_N]$, 
$\mathcal{X}_i$ is the local feasible set of the decision vector $x_i$ which is compact and convex, 
$Q_i$ and $\Sigma$ are diagonal matrices with each diagonal entry positive, 
and the constant vector $c \in \rset{m}{}$ denotes the quantities of available subassemblies. 

Suppose $N = 5$ players participate in this game to compete for $m = 10$ types of subassemblies. 
The decision vector of each player $i$ has dimension $n_i$ chosen uniformly at random from $\{7, 8, 9, 10\}$.
The local feasible set $\mathcal{X}_i$ is the direct product of $n_i$ connected compact intervals. 
%$[0, \mathcal{X}_{ij}]$ with $\mathcal{X}_{ij}$ being a sufficiently large constant. 
The communication graph consists of a directed circle and two randomly selected edges. 
% In this case, Assumptions~$\ref{asp:fesb-set}$ and $\ref{asp:commtopo}$ are fulfilled. 

\subsubsection{Assumptions Verification}

We claim that the function $\mathcal{Q}_i(x_i; d_i)$ is a piecewise linear function in $x_i \in \mathcal{X}_i$ given any fixed $d_i \in \mathcal{D}_i$, where $\mathcal{D}_i$ and $\mathcal{X}_i$ are both bounded. 
We first introduce the residual variable $r_i = d_i - z_i$ and convert the inequality constraints in $\mathcal{Q}_i(x_i; d_i)$ to equality ones as follows:
\begin{equation}
\begin{cases}
\minimize_{z_i\geq \bzero, h_i\geq \bzero, r_i \geq \bzero} \; -p_i^Tz_i - s^TA_ih_i\\
\subj \; h_i = x_i - H_i^Tz_i, z_i + r_i = d_i. 
\end{cases}
\end{equation}

By letting 
$B_i = \Big[\begin{smallmatrix} H_i^T & I_{n_i} & \bzero_{n_i \times \ell_i} \\ I_{\ell_i} & \bzero_{\ell_i \times n_i} & I_{\ell_i} \end{smallmatrix}\Big]$, 
$u_i \coloneqq [z_i; h_i; r_i]$, 
$q_i \coloneqq [-p_i; -A_i^Ts; \bzero_{\ell_i}]$, 
$\Tilde{I}_i = [I_{n_i}; \bzero_{\ell_i \times n_i}]$, and 
$\Tilde{d}_i = [\bzero_{n_i}; d_i]$, the above constrained linear programming can be presented as:
$\minimize_{u_i} q_i^T u_i$, while $\subj B_iu_i = \tilde{I}_ix_i + \tilde{d}_i$ and $u_i \geq \bzero$. 
Its dual problem can then be derived as:
\begin{equation}\label{eq:2nd-stage-dual}
\maximize_{v_i} (\tilde{I}_ix_i + \tilde{d}_i)^T v_i, \;
\subj B_i^Tv_i \leq q_i.
\end{equation}
We progress with the dual problem which only has $x_i$ as the coefficients of the objective function.  
Since the feasible set $\mathcal{X}_i$ is compact inside the non-negative orthant, the simplex method will identify a vertex solution to the problem \eqref{eq:2nd-stage-dual}, even though the problem may admit unbounded solutions. 
Note that the polyhedral $\mathcal{P}_i \coloneqq \{v_i \in \rset{n_i + \ell_i}{} \mid B_i^Tv_i \leq q_i\}$ only admits a finite number of vertices $\mathcal{V}_i \coloneqq \{V_1, V_2, \ldots, V_{M}\}$ ($-\infty$ excluded). 
Thus,
$\mathcal{Q}_i(x_i; d_i) \coloneqq \max_{V_j \in \mathcal{V}_i} V_j^T \cdot [x_i; d_i]$,
which completes the proof that $\mathcal{Q}_i(x_i; d_i)$ is a piecewise linear function in $x_i$. 
It follows that the expected value function $\expt{D_i}{\mathcal{Q}_i(x_i; D_i)}$ is a convex function in $x_i$ \cite[Sec.~3.2.1]{boyd2004convex}. 
Applying the arguments in \cite[Sec.~V]{huang2021distributed} to the remaining parts of $\mathbb{J}_i(x_i; x_{-i})$, we can show that the pseudogradient $\mathbb{F}$ is strongly monotone. 
By \cite[Prop.~12.11]{palomar2010convex}, this multi-product assembly problem admits a unique Nash equilibrium. 
It can also be checked numerically that there exists a $\rho_{\mu} > 0$ such that the operator $\mathcal{R}^T\extgjacob + \frac{\rho_\mu}{2}L_n$ is maximally monotone. 
These arguments guarantee that Assumptions~\ref{asp:subgrad}, \ref{asp:ne-exist} and \ref{asp:convg} hold for this SGNEP. 

% If we check the smooth part of the objective functions, they satisfy the two conditions stated in Assumption~\ref{asp:proj-stoch-subgrad} by applying the similar arguments in Sec.~\ref{subsect:nash-cournot-dist} and \cite[Sec.~V]{huang2021distributed}. 
% It suffices to verify that the nonsmooth parts of the objective functions also fulfill these conditions. 

To guarantee that Assumption~\ref{asp:proj-stoch-subgrad} holds, it suffices to verify that the nonsmooth parts of the objectives fulfill these conditions. 
We can establish the interchangeability of subdifferential and integral using \cite[Thm.~7.52]{shapiro2014lectures}. 
% Therefore, $g^{(k)}_{i, t} \in \partial \hat{J}^{(k)}_i(y^{i(k+1)}_{i,t}; \xi^{(k)}_{i,t})$ is an unbiased estimate for some subgradient of $\hat{\mathbb{J}}^{(k)}_i(y^{i(k+1)}_{i,t})$ at the point $y^{i(k+1)}_{i,t}$.
We then consider the function $\phi_i(\chi) \coloneqq \max_{v_i \in \mathcal{P}_i}(v_i^T \cdot \chi)$, where $\mathcal{P}_i \coloneqq \{v_i \in \rset{n_i + \ell_i}{} \mid B_i^Tv_i \leq q_i\}$. 
Since the set $\mathcal{P}_i$ is nonempty, $\phi_i(\chi)$ is the support function of $\mathcal{P}_i$. 
By definition, the support function $\phi_i(\chi)$ is the conjugate function of the indicator function $\iota_{\mathcal{P}_i}(\chi)$, i.e., $\phi_i(\chi) = \max_{v_i \in \mathcal{P}_i}(v_i^T \cdot \chi) = \max_{v_i}(v_i^T \cdot \chi - \iota_{\mathcal{P}_i}(v_i))$. 
Since the set $\mathcal{P}_i$ is convex and closed, the function $\iota_{\mathcal{P}_i}(\chi)$ is convex, lower semicontinuous and proper.  
By \cite[Thm.~7.5 and (7.24)]{shapiro2014lectures}, we obtain $\partial \phi_i(\chi) = \argmax_{v_i} \{v_i^T\cdot \chi - \iota_{\mathcal{P}_i}(v_i)\} = \argmax_{v_i \in \mathcal{P}_i}\{v_i^T \cdot \chi\}$. 
Moreover, by the chain rule, the subdifferential should be
$\partial \mathcal{Q}_i(x_i; d_i) = \tilde{I}^T_i \cdot \argmax_{v_i \in \mathcal{P}_i} \{(\tilde{I}_i x_i + \tilde{d}_i)^T \cdot v_i\}$. 
As we discussed in the verification of Assumption~\ref{asp:subgrad}, the solution set of $\argmax_{v_i \in \mathcal{P}_i} \{(\tilde{I}_i x_i + \tilde{d}_i)^T \cdot v_i\}$ must contain at least one of $\mathcal{P}_i$'s vertices. 
Hence, we can always find a bounded subgradient of $\mathcal{Q}_i$ such that Assumption~\ref{asp:proj-stoch-subgrad} (ii) holds.

\subsubsection{Simulation Results}

We restrict each random variable $D_i$ to having a finite range $\{d_1, \ldots, d_L\}$ with the probability distribution $\{P_1, \ldots, P_{L}\}$. 
Under this restriction, the objective function of each player $i$ can be explicitly written as:
$\mathbb{J}_i(x_i; x_{-i}) = \frac{1}{2}x_i^TQ_ix_i + (\expt{}{C} + \Sigma Ax)^TA_ix_i + \sum_{l=1}^{L}P_l\mathcal{Q}_i(x_i; d_l)$.
The method proposed in \cite{huang2021distributed} can then be applied to compute the unique v-SGNE for reference. 
The performance of Algorithm~\ref{alg:node-edge} when solving this multi-product assembly problem is illustrated in Fig.~\ref{fig:two-stage}. 
The thin lines reflect the simple moving averages of these metrics with a window size of $20$. 
The curves of $T^{(k)} \propto k^{2.1}$ illustrate a steady convergence towards the v-GNE as suggested in Theorem~\ref{thm:summb}, while the trajectories of $T^{(k)} = 20$ stop decreasing after some iterations. 
The curves of $T^{(k)} \propto k$ also keep descending yet with a gentler trend compared with those of $T^{(k)} \propto k^{2.1}$, which suggests the possibility of some relaxations to the current conditions in Theorems~\ref{thm:main-convg-thm} and \ref{thm:summb}. 
% The trends observed are similar to those in Sec.~\ref{subsect:nash-cournot-dist}, and for other detailed figure descriptions, please refer to Sec.~\ref{subsubsect:nash-dist-result}. 
For the detailed figure descriptions, please refer to Sec.~\ref{subsubsect:nash-dist-result}. 

\begin{figure}
    \centering
    \includegraphics[width=0.42\textwidth]{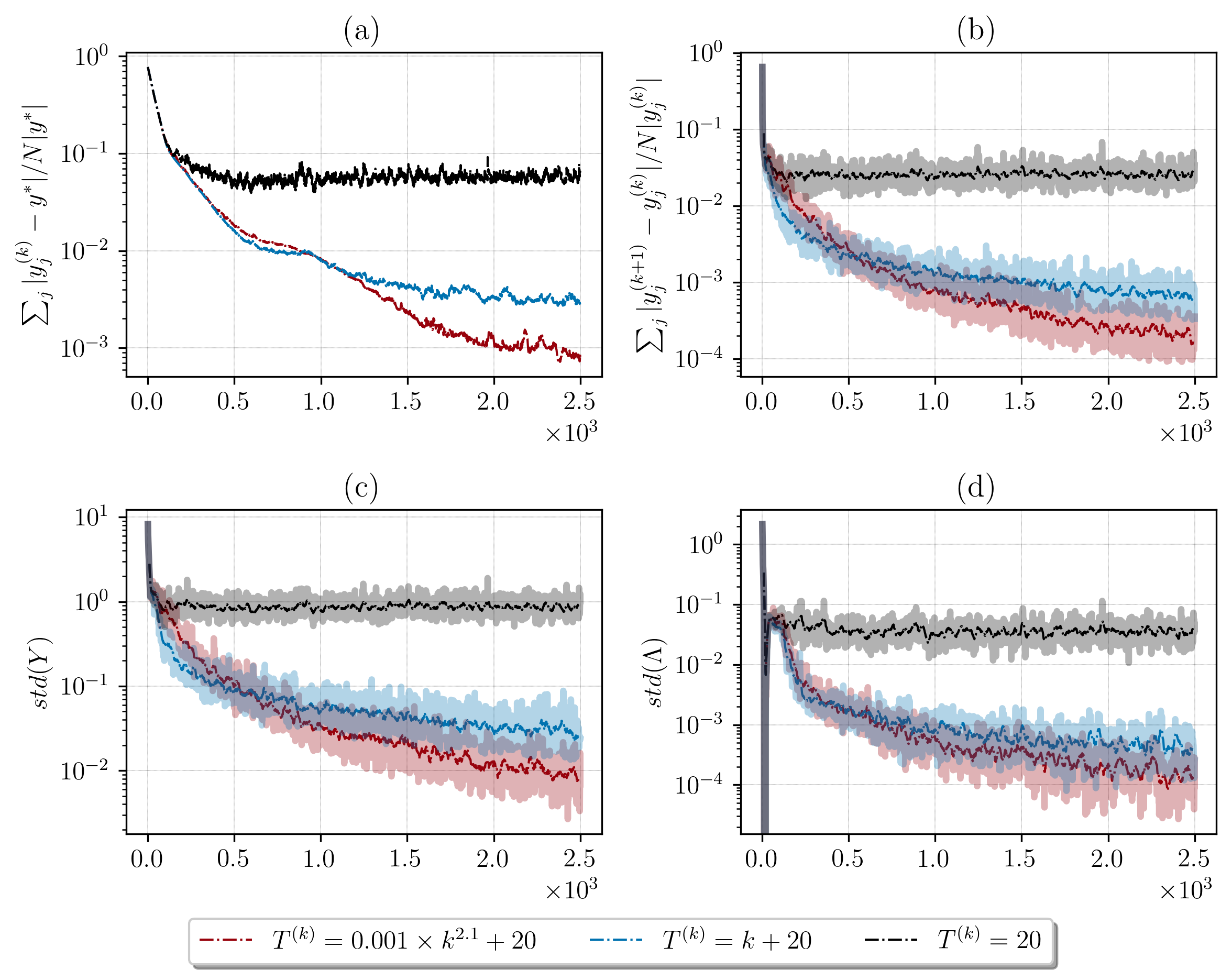}
    \caption{Performances of Alg.~\ref{alg:node-edge} for the Two-Stage Model}
    \label{fig:two-stage}
\end{figure}

\section{Conclusion and Future Directions}\label{sect:conclu}

In this paper, we study the stochastic generalized Nash equilibrium problem and propose a distributed stochastic algorithm under the partial-decision information setting based on solving augmented best-response subproblems induced by the Douglas-Rachford scheme. 
The proposed algorithm is proved to converge to a true variational stochastic generalized Nash equilibrium if the sequence of inertial step sizes and the inverse of the number of realizations per major iteration decrease altogether at a proper rate. 
This raises the question if there exists a less conservative bound for this decreasing rate such that the proposed algorithm can still converge yet with a faster convergence rate and fewer observations needed per major iteration. 
Another interesting work remains concerning the convergence rate analysis of the proposed algorithm. 
As we have previously mentioned, the fixed point iteration discussed in this paper engages two reflected resolvent operators, which merely admit quasinonexpansiveness rather than contractiveness. 
The convergence rate analysis under this setting remains an under-explored yet increasingly active direction \cite{boley2013local, liang2018local, cui2020variance}. 
Finally, although we only analyze the projected stochastic subgradient method, the main convergence result in Theorem~\ref{thm:main-convg-thm} actually allows a lot of possibilities. 
It would be interesting to develop inexact solvers based on different stochastic optimization schemes, e.g. proximal point methods \cite{asi2019importance}, that are more sample-efficient or further relax the assumptions made.

\appendices
\section*{Appendix}
% printing the subsection counter in capital alphabetic form
\renewcommand{\thesubsection}{\Alph{subsection}}

\subsection{Proof of Theorem \ref{thm:main-convg-thm}}\label{pf:main-convg-thm}

\begin{proof}
The following proof is largely inspired from that of \cite[Prop.~5.34]{BauschkeHeinzH2017CAaM} for deterministic sequences and nonexpansive operators case with suitable modifications. 
Given an arbitrary initial point $\tilde{\psi}^{(0)} = \tilde{\psi}^{(0)}_*$, we let $(\tilde{\psi}^{(k)})_{k \in \nset{}{}}$ denote the sequence generated by the approximate iteration \eqref{eq:approx-iter}.
Another auxiliary sequence $(\tilde{\psi}^{(k)}_*)_{k \in \nset{}{}}$ is constructed by letting $\tilde{\psi}^{(k+1)}_* \coloneqq \mathscr{P}_*(\tilde{\psi}^{(k)})$. 
We next try to extract a recursive relationship w.r.t. $\norm{\tilde{\psi}^{(k+1)} - \tilde{\psi}^*}^2_\pspace$ to establish that $\sum_{k \in \nset{}{}}\res{\tilde{\psi}^{(k)}} < \infty$ a.s., where $\tilde{\psi}^*$ is a fixed point of $\mathscr{R}_*$. 
Writing the explicit updating formula of $\norm{\tilde{\psi}^{(k+1)}_* - \tilde{\psi}^*}^2_\pspace$ yields
\begin{align*}\label{eq:main-covg-exact-updt}
& \norm{\supsub{\tilde{\psi}}{(k+1)}{*} - \supsub{\tilde{\psi}}{*}{}}^2_\pspace = \norm{(1 - \supsub{\gamma}{(k)}{})\supsub{\tilde{\psi}}{(k)}{} + \supsub{\gamma}{(k)}{}\mathscr{R}_*(\supsub{\tilde{\psi}}{(k)}{}) - \supsub{\tilde{\psi}}{*}{}}^2_\pspace \\
&  = (1 - \supsub{\gamma}{(k)}{})\norm{\supsub{\tilde{\psi}}{(k)}{} - \supsub{\tilde{\psi}}{*}{}}^2_\pspace + \supsub{\gamma}{(k)}{}\norm{\mathscr{R}_*(\supsub{\tilde{\psi}}{(k)}{}) - \mathscr{R}_*(\supsub{\tilde{\psi}}{*}{})}^2_\pspace \\
& \qquad - \supsub{\gamma}{(k)}{}(1 - \supsub{\gamma}{(k)}{})\norm{\mathscr{R}_*(\supsub{\tilde{\psi}}{(k)}{}) - \supsub{\tilde{\psi}}{(k)}{}}^2_\pspace \\ 
& \leq \norm{\supsub{\tilde{\psi}}{(k)}{} - \supsub{\tilde{\psi}}{*}{}}^2_\pspace - \supsub{\gamma}{(k)}{}(1 - \supsub{\gamma}{(k)}{})(\res{\supsub{\tilde{\psi}}{(k)}{}})^2.
\end{align*}
where the inequality follows from the fact that $\mathscr{R}_*$ is quasinonexpansive. 
Next, we derive a recursive relationship for $\norm{\tilde{\psi}^{(k+1)} - \tilde{\psi}^*}^2_\pspace$ as follows:
\begin{align*}
& \norm{\tilde{\psi}^{(k+1)} - \tilde{\psi}^*}^2_\pspace = \norm{\gamma^{(k)}\epsilon^{(k)} + \tilde{\psi}^{(k+1)}_* - \tilde{\psi}^*}^2_\pspace \\
& = \norm{\tilde{\psi}^{(k+1)}_* - \tilde{\psi}^*}^2_\pspace  + (\gamma^{(k)}\varepsilon^{(k)})^2  + 2\langle \gamma^{(k)}\epsilon^{(k)}, \tilde{\psi}^{(k+1)}_* - \tilde{\psi}^*\rangle_\pspace \\
& \leq \norm{\tilde{\psi}^{(k)} - \tilde{\psi}^*}^2_\pspace - \gamma^{(k)}(1 - \gamma^{(k)})(\res{\tilde{\psi}^{(k)}})^2  \\
& \qquad + (\gamma^{(k)}\varepsilon^{(k)})^2 + 2\gamma^{(k)}\varepsilon^{(k)}\norm{\tilde{\psi}^{(k)} - \tilde{\psi}^*}_\pspace ,
\end{align*}
where the last inequality follows from the relation derived above and the Cauchy-Schwarz inequality. 
Taking conditional expectation $\expt{}{\cdot \mid \mathcal{F}^{(k)}}$ on both sides yields:
\begin{equation}\label{eq:main-covg-rs-recur}
\begin{split}
& \expt{}{\norm{\tilde{\psi}^{(k+1)} - \tilde{\psi}^*}^2_\pspace \mid \mathcal{F}^{(k)}} \\
& \leq \norm{\tilde{\psi}^{(k)} - \tilde{\psi}^*}^2_\pspace - \gamma^{(k)}(1 - \gamma^{(k)})(\res{\tilde{\psi}^{(k)}})^2 \\
& \quad + \expt{}{2\gamma^{(k)}\varepsilon^{(k)}\norm{\tilde{\psi}^{(k)} - \tilde{\psi}^*}_\pspace + (\gamma^{(k)}\varepsilon^{(k)})^2\mid \mathcal{F}^{(k)}}. 
\end{split}
\end{equation}
Based on the fact that $\sum_{k \in \nset{}{}} \gamma^{(k)}\expt{}{\varepsilon^{(k)} \mid \mathcal{F}^{(k)}} < +\infty$ a.s. and $(\norm{\tilde{\psi}^{(k)} - \tilde{\psi}^*}_\pspace)_{k \in \nset{}{}}$ is bounded a.s., we can obtain that $\sum_{k \in \nset{}{}} \expt{}{2\gamma^{(k)}\varepsilon^{(k)}\norm{\tilde{\psi}^{(k)} - \tilde{\psi}^*}_\pspace + (\gamma^{(k)}\varepsilon^{(k)})^2\mid \mathcal{F}^{(k)}} < \infty$ a.s. 
By applying the Robbins-Siegmund theorem, we can then conclude that on a set $\hat{\Omega}$ which has probability one, 
$\sum_{k \in \nset{}{}}\gamma^{(k)}(1 - \gamma^{(k)})\res{\tilde{\psi}^{(k)}}^2 < \infty$ with $\gamma^{(k)} \in [0, 1]$ and $\sum_{k \in \nset{}{}}\gamma^{(k)}(1 - \gamma^{(k)}) = +\infty$. 
Now we fix an arbitrary sample path $\hat{\omega} \in \hat{\Omega}$ for subsequent analysis, while omitting $\hat{\omega}$ for brevity. 
In the following we will prove by contradiction that $\liminf_{k \to \infty} \res{\tilde{\psi}^{(k)}}^2 = 0$
Suppose otherwise that $\liminf_{k \to \infty} \res{\tilde{\psi}^{(k)}}^2 = \delta$, where $\delta > 0$ is some positive constant.  
Then there exists a sufficiently large $K_\delta$ such that for any $k > K_\delta$, $\res{\tilde{\psi}^{(k)}}^2 > \delta/2$. 
By this lower bound, we have $\sum_{k > K_\delta} \gamma^{(k)}(1 - \gamma^{(k)})\res{\tilde{\psi}^{(k)}}^2 > \delta/2 \sum_{k > K_\delta} \gamma^{(k)}(1 - \gamma^{(k)}) = +\infty$, which contradicts the previous statement that $\sum_{k \in \nset{}{}}\gamma^{(k)}(1 - \gamma^{(k)})\res{\tilde{\psi}^{(k)}}^2 < \infty$. 
This shows that $\liminf_{k \to \infty} \res{\tilde{\psi}^{(k)}}^2 = 0$. 
As a result, there exists a subsequence, denoted by $(\tilde{\psi}^{(k_i)})_{i \in \nset{}{}}$, such that $\lim_{i \to \infty} \res{\tilde{\psi}^{(k_i)}}^2 = 0$.

Moreover, the above subsequence $(\tilde{\psi}^{(k_i)})_{k_i \in \nset{}{}}$ is bounded and thus has a convergent subsequence $(\tilde{\psi}^{(l_i)})_{i \in \nset{}{}}$ where $(l_i)_{i \in \nset{}{}} \subseteq (k_i)_{i \in \nset{}{}}$ such that $\lim_{i \to \infty} \tilde{\psi}^{(l_i)} = \tilde{\psi}^\dagger$. 
If Assumption~\ref{asp:convg}(i) holds, by definition, $\mathscr{R}_*$ is a nonexpansive mapping. 
It then follows from \cite[Cor.~4.28]{BauschkeHeinzH2017CAaM} that $\tilde{\psi}^\dagger \in \fix{\mathscr{R}_*}$. 
If Assumption~\ref{asp:convg}(ii) holds instead, from \cite[Lemma~6]{huang2021distributed},  $\rrcomp_*$ is a continuous mapping, i.e., $\lim_{i \to \infty} \res{\tilde{\psi}^{(l_i)}} = 0$ implies $\rrcomp_*(\tilde{\psi}^\dagger) = \tilde{\psi}^\dagger$ and hence $\tilde{\psi}^\dagger \in \fix{\mathscr{R}_*}$. 
Therefore we can substitute $\tilde{\psi}^*$ in \eqref{eq:main-covg-rs-recur} with $\tilde{\psi}^\dagger$. 
\Tblue{By \cite[Thm.~1]{robbins1971convergence}}, $\lim_{k \to \infty} \norm{\tilde{\psi}^{(k)} - \tilde{\psi}^\dagger}^2_\pspace$ exists.
Since $(\tilde{\psi}^{(l_i)})_{i \in \nset{}{}}$ is a subsubsequence of $(\tilde{\psi}^{(k)})_{k \in \nset{}{}}$ converging to the fixed point $\tilde{\psi}^\dagger$, we can conclude that $\lim_{k \to \infty} \norm{\tilde{\psi}^{(k)} - \tilde{\psi}^\dagger}^2_\pspace = 0$, and hence $\lim_{k \to \infty}\tilde{\psi}^{(k)} = \tilde{\psi}^\dagger$.
Altogether, $\psi^\dagger \coloneqq J_{\poptA}(\tilde{\psi}^\dagger)$ belongs to the zero set of $\optT$ in \eqref{eq:optT}.
Combining this with the conclusions of Theorem~\ref{thm:zerokkt}, the proof is complete. 
\end{proof}

\subsection{Proof of Lemma \ref{le:proj-stoch-convg-rate}}
\label{pf:proj-stoch-convg-rate}
\begin{proof}
For each player $i \in \playerN$, at an arbitrary major iteration $k$ and its minor iteration $t = 0, \ldots, T^{(k)}_{i} - 1$, by applying the update inside the for-loop of Algorithm~\ref{alg:proj-stoch-subgrad} and using the nonexpansiveness of the projection operator onto a convex set, we can obtain the following inequality of the distance between the approximate minimizer after the $t$th minor iteration $y^{i(k+1)}_{i,t+1}$ and the accurate minimizer $y^{i(k+1)}_{i,*}$:
\begin{equation}\label{le:proj-grad-1}
\norm{y^{i(k+1)}_{i,t+1} - y^{i(k+1)}_{i,*}}^2_2 \leq \norm{y^{i(k+1)}_{i,t} - \kappa_{i,t}\cdot g^{(k)}_{i,t} - y^{i(k+1)}_{i,*}}^2_2.
\end{equation}
Expanding the $\ell^2$ norm and taking conditional expectation $\bexpt{}{\cdot \mid \sigma\{\mathcal{F}_k, \xi^{(k)}_{i, [t]}\}}$ on both sides of \eqref{le:proj-grad-1} yields:
\begin{align}\label{eq:proj-subgrad-1st}
\begin{split}
& \bexpt{}{\norm{y^{i(k+1)}_{i, t+1} - y^{i(k+1)}_{i, *}}^2_2 \mid \sigma\{\mathcal{F}_k, \xi^{(k)}_{i, [t]}\}} \\
&  \leq \kappa_{i,t}^2\bexpt{}{\norm{g^{(k)}_{i,t}}^2_2 \mid \sigma\{\mathcal{F}_k, \xi^{(k)}_{i, [t]}\}} + \norm{y^{i(k+1)}_{i,t} - y^{i(k+1)}_{i,*}}^2_2 \\
& \qquad - 2\kappa_{i,t}\langle y^{i(k+1)}_{i,t} - y^{i(k+1)}_{i,*}, \mathbbm{g}^{(k)}_{i,t}\rangle , \\
\end{split}
\end{align}
where $\mathbbm{g}^{(k)}_{i,t} \coloneqq \expt{}{g^{(k)}_{i,t} \mid \sigma\{\mathcal{F}_k, \xi^{(k+1)}_{i, [t]}\}} \in \partial_{y^i_i}\hat{\mathbb{J}}^{(k)}_i(y^{i(k+1)}_{i, t})$ by Assumption~\ref{asp:proj-stoch-subgrad}. 
Using the $\frac{1}{\tau_{1i}}$-strong convexity of $\hat{\mathbb{J}}^{(k)}_{i}$, the inner product inside the inequality \eqref{eq:proj-subgrad-1st} satisfies
$\langle y^{i(k+1)}_{i,t} - y^{i(k+1)}_{i,*}, \mathbbm{g}^{(k)}_{i,t}\rangle \geq \hat{\mathbb{J}}^{(k)}_{i}(y^{i(k+1)}_{i,t}) - \hat{\mathbb{J}}^{(k)}_{i}(y^{i(k+1)}_{i,*}) + \frac{1}{2\tau_{1i}}\norm{y^{i(k+1)}_{i,t} - y^{i(k+1)}_{i,*}}^2_2$. 
We then take conditional expectations $\expt{}{\cdot \mid \mathcal{F}_k}$ on both sides of the above inequality.
By the rule of successive conditioning and the fact that $y^{i(k+1)}_{i,*}$ minimizes $\hat{\mathbb{J}}^{(k)}_{i}(\cdot)$, the following inequality holds a.s.:
\begin{equation}\label{eq:proj-subgrad-3rd}
\begin{split}
& \bexpt{}{\norm{y^{i(k+1)}_{i, t+1} - y^{i(k+1)}_{i,*}}^2_2 \mid \mathcal{F}_k} \leq \kappa_{i,t}^2\bexpt{}{\norm{g^{(k)}_{i,t}}^2_2 \mid \mathcal{F}_k}  \\
& \quad + (1 - \frac{\kappa_{i,t}}{\tau_{1i}})\bexpt{}{\norm{y^{i(k+1)}_{i, t} - y^{i(k+1)}_{i,*}}^2_2 \mid \mathcal{F}_k}.
\end{split}
\end{equation}
Re-arranging \eqref{eq:proj-subgrad-3rd} and applying Assumption~\ref{asp:proj-stoch-subgrad}(ii), we have \Tblue{that} the following holds a.s.:
\begin{equation}\label{eq:tele-term}
\begin{split}
&  \frac{1}{\kappa_{i,t}}\bexpt{}{\norm{y^{i(k+1)}_{i,t+1} - y^{i(k+1)}_{i,*}}^2_2 \mid \mathcal{F}_k}- (\frac{1}{\kappa_{i,t}} - \frac{1}{\tau_{1i}})\cdot  \\
& \quad \bexpt{}{\norm{y^{i(k+1)}_{i,t} - y^{i(k+1)}_{i,*}}^2_2 \mid \mathcal{F}_k} \leq \kappa_{i,t}(\alpha_{g,i}^2\norm{\tilde{\psi}^{(k)}}^2_2 + \beta^2_{g,i}).
\end{split}
\end{equation}
By setting $\kappa_{i,t} \coloneqq \frac{2\tau_{1i}}{t+2}$, multiplying both sides of \eqref{eq:tele-term} by $(t+1)/2$, and summing \eqref{eq:tele-term} for $t=0, \ldots, T-1$, for an arbitrary $T \in \{1, \ldots, T^{(k)}_{i}\}$, we obtain a telescoping sum and have that the following holds a.s.:
\begin{equation}\label{eq:proj-subgrad-4th}
\begin{split}
& \frac{(T+1)T}{4\tau_{1i}} \bexpt{}{\norm{y^{i(k+1)}_{i,T} - y^{i(k+1)}_{i,*}}^2_2 \mid \mathcal{F}_k} \\
& \qquad\qquad \leq \sum_{t=0}^{T-1}\frac{t+1}{2}\cdot\frac{2\tau_{1i}}{t+2}(\alpha_{g,i}^2\norm{\tilde{\psi}^{(k)}}^2_2 + \beta^2_{g,i}).
\end{split}
\end{equation}
Simplifying \eqref{eq:proj-subgrad-4th}, we deduce that
$\bexpt{}{\norm{y^{i(k+1)}_{i,T} - y^{i(k+1)}_{i,*}}^2_2 \mid \mathcal{F}_k} \leq 4\tau_{1i}^2T^{-1}(\alpha^2_{g,i}\norm{\tilde{\psi}^{(k)}}^2_2 + \beta^2_{g,i})$ a.s.
\end{proof}

\subsection{Proof of Lemma \ref{le:convg-rate-augvec}}
\label{pf:convg-rate-augvec}
\begin{proof}
By the nonexpansiveness of the reflected resolvent $R_{\poptB}$, the approximate error $\varepsilon^{(k)}$ should satisfy:
\begin{equation}
\begin{split}
\expt{}{\varepsilon^{(k)} \mid \mathcal{F}_k} &\leq \bexpt{}{\norm{R_{\appoptA}(\tilde{\psi}^{(k)}) - R_{\poptA}(\tilde{\psi}^{(k)})}_\pspace \mid \mathcal{F}_k} \\
& = 2\bexpt{}{\norm{\psi^{(k+1)} - \psi^{(k+1)}_*}_\pspace \mid \mathcal{F}_k},
\end{split}
\end{equation}
where $\psi^{(k+1)} = [\by^{(k+1)}; \blambda^{(k+1)}; \bmu^{(k+1)}; \bz^{(k+1)}] \coloneqq J_{\appoptA}(\tilde{\psi}^{(k)})$ is the stack vector obtained by using the inexact solver suggested in Algorithm~\ref{alg:proj-stoch-subgrad} and $\psi^{(k+1)}_* = [\by^{(k+1)}_*; \blambda^{(k+1)}_*; \bmu^{(k+1)}_*; \bz^{(k+1)}_*] \coloneqq J_{\poptA}(\tilde{\psi}^{(k)})$ is the one using the accurate solver. 
Given the conclusion of Lemma~\ref{le:proj-stoch-convg-rate} and the first two for-loops in Algorithm~\ref{alg:node-edge}, the approximate error of the dual variables $\blambda$ has the following upper bound:
\begin{align*}
& \bexpt{}{\norm{\blambda^{(k+1)} - \blambda^{(k+1)}_*}_2 \mid \mathcal{F}_k}\leq \bexpt{}{\norm{\tau_2\Lambda\mathcal{R}(\by^{(k+1)} - \by^{(k+1)}_*)}_2 \mid \mathcal{F}_k} \\
& \qquad \leq \norm{\tau_{2}\Lambda\mathcal{R}}_2 \cdot \bexpt{}{\norm{\by^{(k+1)} - \by^{(k+1)}_*}_2 \mid \mathcal{F}_k}. 
% & \bexpt{}{\norm{\bmu^{(k+1)} - \bmu^{(k+1)}_*}_2 \mid \mathcal{F}_k} = \bexpt{}{\norm{\tau_3B_n^T(\by^{(k+1)} - \by^{(k+1)}_*)}_2 \mid \mathcal{F}_k } \\
% & \qquad \leq \norm{\tau_3B_n^T}_2 \cdot \bexpt{}{\norm{\by^{(k+1)} - \by^{(k+1)}_*}_2 \mid \mathcal{F}_k}, \\
% & \bexpt{}{\norm{\bz^{(k+1)} - \bz^{(k+1)}_*}_2 \mid \mathcal{F}_k} = \bexpt{}{\norm{\tau_4B_m^T(\blambda^{(k+1)} - \blambda^{(k+1)}_*)}_2 \mid \mathcal{F}_k} \\
% & \qquad \leq \norm{\tau_4B_m^T}_2 \cdot \norm{\tau_{2}\Lambda\mathcal{R}}_2 \cdot \bexpt{}{\norm{\by^{(k+1)} - \by^{(k+1)}_*}_2 \mid \mathcal{F}_k}.
\end{align*}
Similar results can be trivially derived for $\bmu$ and $\bz$, the details of which are omitted for brevity.
Altogether, we have that the following relation $\bexpt{}{\norm{\psi^{(k+1)} - \psi^{(k+1)}_*}_2 \mid \mathcal{F}_k} \leq C_1 \cdot \bexpt{}{\norm{\by^{(k+1)} - \by^{(k+1)}_*}_2 \mid \mathcal{F}_k}$ holds for some constant $C_1$. 
For each player $i \in \playerN$, the local estimates of others' decisions are the same in $y^{i(k+1)}$ and $y^{i(k+1)}_*$, while the local decisions, by Lemma~\ref{le:proj-stoch-convg-rate}, satisfy $\bexpt{}{\norm{y^{i(k+1)}_i - y^{i(k+1)}_{i,*}}^2_2 \mid \mathcal{F}_k} \leq 4\tau_{1i}^2(T^{(k)}_i)^{-1}(\alpha^2_{g,i}\norm{\tilde{\psi}^{(k)}}^2_2 + \beta^2_{g,i})$ a.s. for each $i \in \playerN$. 
Picking the maximum coefficients $\bar{\alpha}_{g} \coloneqq \max\{\alpha_{g,i}: i \in \playerN\}$, $\bar{\beta}_{g} \coloneqq \max\{\beta_{g,i}: i \in \playerN\}$, $\bar{\tau}_{1} \coloneqq \max\{\tau_{1, i}: i \in \playerN\}$ and the minimum minor steps taken $\ubar{T}^{(k)} \coloneqq \min\{T^{(k)}_{i}: i \in \playerN\}$ over all players. 
By Jensen's inequality and the non-negativity of $\alpha_{g,i}$, $\beta_{g,i}$, and $\norm{\tilde{\psi}^{(k)}}$, an upper bound for the stacked local decisions and estimates is given by:
\begin{equation*}
\begin{split}
& \expt{}{\norm{\by^{(k+1)} - \by^{(k+1)}_*}_2 \mid \mathcal{F}_k} 
\leq (\expt{}{\sum_{i \in \playerN}\norm{y^{i(k+1)}_i - y^{i(k+1)}_{i*}}^2_2 \mid \mathcal{F}_k})^{1/2} \\
& \qquad \leq 2\sqrt{N}\bar{\tau}_{1}(\ubar{T}^{(k)})^{-1/2}(\bar{\alpha}_{g}\norm{\tilde{\psi}^{(k)}}_2 + \bar{\beta}_{g}), \text{ a.s.}
\end{split}
\end{equation*}
Combining the above inequalities, we derive the following a.s. upper bound in the Euclidean space:
\begin{equation}\label{eq:augvec-eucl}
\bexpt{}{\norm{\psi^{(k+1)} - \psi^{(k+1)}_*}_2 \mid \mathcal{F}_k} \leq C_2(\ubar{T}^{(k)})^{-1/2}(\bar{\alpha}_{g}\norm{\tilde{\psi}^{(k)}}_2 + \bar{\beta}_{g}),
\end{equation}
where $C_2 \coloneqq 2C_1\bar{\tau}_{1}\sqrt{N}$. 
We convert the above conclusion from the Euclidean space to the inner product space $\pspace$ defined by the positive definite design matrix $\Phi$. 
The maximum (resp. minimum) eigenvalue of $\Phi$ is denoted by $\bar{\sigma}_{\Phi}$ (resp. $\ubar{\sigma}_{\Phi}$). 
Then \eqref{eq:augvec-eucl} implies the following relation holds a.s. in $\pspace$:
\begin{equation}
\bexpt{}{\norm{\psi^{(k+1)} - \psi^{(k+1)}_*}_\pspace \mid \mathcal{F}_k}  \leq \frac{C_2\sqrt{\bar{\sigma}_{\Phi}}}{(\ubar{T}^{(k)})^{1/2}}(\frac{\bar{\alpha}_{g}}{\sqrt{\ubar{\sigma}_{\Phi}}}\norm{\tilde{\psi}^{(k)}}_\pspace + \bar{\beta}_{g}). 
\end{equation}
Hence, there exist positive constants $\alpha_{\psi}$ and $\beta_{\psi}$ independent of $k$ such that 
$\expt{}{\varepsilon^{(k)} \mid \mathcal{F}_k} \leq (\ubar{T}^{(k)})^{-1/2}(\alpha_\psi\norm{\tilde{\psi}^{(k)}}_\pspace + \beta_{\psi})$ a.s.
\end{proof}

\subsection{Proof of Theorem \ref{thm:summb}}
\label{pf:summb}
\begin{proof}
Consider a sequence of augmented vectors $(\tilde{\psi}^{(k)})_{k \in \nset{}{}}$ generated by the approximate iteration $\mathscr{P} = \idty + \gamma^{(k)}(\rrcomp - \idty)$ and a sequence $(\tilde{\psi}^{(k)}_*)_{k \in \nset{}{}}$ generated by $\tilde{\psi}^{(k+1)}_{*} \coloneqq \mathscr{P}_*(\tilde{\psi}^{(k)})$. 
Let $\tilde{\psi}^*$ denote one of the fixed points of $\mathscr{R}_*$. 
To prove that $(\tilde{\psi}^{(k)})_{k \in \nset{}{}}$ is bounded a.s., note that
\begin{align*}
& \expt{}{\norm{\tilde{\psi}^{(k+1)} - \tilde{\psi}^*}_\pspace \mid \mathcal{F}_k} = \expt{}{\norm{\tilde{\psi}^{(k+1)} - \tilde{\psi}^{(k+1)}_* + \tilde{\psi}^{(k+1)}_* - \tilde{\psi}^*}_\pspace \mid \mathcal{F}_k} \\
& \leq \gamma^{(k)}\expt{}{\varepsilon^{(k)} \mid \mathcal{F}_k} + \expt{}{\norm{\mathscr{P}_*(\tilde{\psi}^{(k)}) - \mathscr{P}_*(\tilde{\psi}^*)}_{\pspace} \mid \mathcal{F}_k}. 
\end{align*}
Let $\gamma^{(k)}_T \coloneqq \gamma^{(k)}(\ubar{T}^{(k)})^{-1/2}$. 
By applying Lemma~\ref{le:convg-rate-augvec} and using the fact that $\mathscr{P}_*$ is (quasi)nonexpansive, we have:
\begin{align*}
& \expt{}{\norm{\tilde{\psi}^{(k+1)} - \tilde{\psi}^*}_\pspace \mid \mathcal{F}_k} \\
& \leq  \gamma^{(k)}_T(\alpha_\psi \norm{\tilde{\psi}^{(k)}}_\pspace + \beta_\psi) + \expt{}{\norm{\tilde{\psi}^{(k)} - \tilde{\psi}^*}_\pspace \mid \mathcal{F}_k} \\
& = \gamma^{(k)}_T(\alpha_\psi \norm{\tilde{\psi}^{(k)} - \tilde{\psi}^* + \tilde{\psi}^*}_\pspace + \beta_\psi) + \norm{\tilde{\psi}^{(k)} - \tilde{\psi}^*}_\pspace \\ 
& \leq (1 + \alpha_\psi \gamma^{(k)}_{T})\norm{\tilde{\psi}^{(k)} - \tilde{\psi}^*}_\pspace + \gamma^{(k)}_{T}(\alpha_\psi\norm{\tilde{\psi}^*}_\pspace + \beta_\psi), \text{a.s.} 
\end{align*}
Since $\norm{\tilde{\psi}^*}_\pspace < \infty$ and we assume that $(\gamma^{(k)}_T)_{k \in \nset{}{}}$ is a summable sequence, \Tblue{the Robbins-Siegmund Theorem (\cite[Thm.~1]{robbins1971convergence})} can be applied to show $\lim_{k \to \infty} \norm{\tilde{\psi}^{(k)} - \tilde{\psi}^*}_\pspace$ exists and is finite a.s.
Consequently, there exists a set $\hat{\Omega}$ which has probability one, such that for any $\hat{\omega} \in \hat{\Omega}$, the sequence $(\norm{\tilde{\psi}^{(k)}(\hat{\omega}) - \tilde{\psi}^*}_\pspace)_{k \in \nset{}{}}$ is bounded. 
Therefore, we can find some constant $B(\hat{\omega})$ which satisfies, for all $k \in \nset{}{}$, $\norm{\tilde{\psi}^{(k)}(\hat{\omega})}_\pspace = \norm{\tilde{\psi}^{(k)}(\hat{\omega}) - \tilde{\psi}^* + \tilde{\psi}^*}_\pspace \leq \norm{\tilde{\psi}^{(k)}(\hat{\omega}) - \tilde{\psi}^*}_\pspace + \norm{\tilde{\psi}^*}_\pspace \leq B(\hat{\omega})$. 

Since the deterministic sequence $(\norm{\tilde{\psi}^{(k)}(\hat{\omega})}_\pspace)_{k \in \nset{}{}}$ is upper bounded by a constant $B(\hat{\omega})$ for any $\hat{\omega} \in \hat{\Omega}$, 
combining Lemma~\ref{le:convg-rate-augvec} and the summability of $(\gamma^{(k)}_T)_{k \in \nset{}{}}$, 
we finally can conclude that
$\sum_{k \in \nset{}{}}\gamma^{(k)}\expt{}{\varepsilon^{(k)} \mid \mathcal{F}_k}(\hat{\omega}) \leq \sum_{k \in \nset{}{}}\gamma^{(k)}_T(\alpha_{\psi}\norm{\tilde{\psi}^{(k)}(\hat{\omega})}_\pspace + \beta_{\psi})
\leq \sum_{k \in \nset{}{}}\gamma^{(k)}_T(\alpha_\psi B(\hat{\omega}) + \beta_\psi) < \infty$.
\end{proof}

\subsection{Almost-Sure Convergence of Subroutine~\ref{alg:proj-stoch-subgrad}}

We now let $J^{(k)}_{\appoptA}$ denote the (scenario-based) approximate operator for the exact resolvent $J_{\poptA}$ at the $k$-th iteration. 
As a reminder, note that the explicit steps of $J^{(k)}_{\appoptA}$ are presented in the first player and edge loops in Algorithm~\ref{alg:node-edge}, before we implement the reflected steps. 
In the following lemma, we are going to establish that the result of $J^{(k)}_{\appoptA}$ can approximate that of $J_{\poptA}$ with arbitrary accuracy almost surely when a sufficiently large number of stochastic subgradient steps have been taken. 

\begin{lemma}
Suppose Assumptions~\ref{asp:subgrad}, \ref{asp:fesb-set}, \ref{asp:phi-pd} , and \ref{asp:proj-stoch-subgrad} hold, and $\tilde{\psi}$ is an arbitrary bounded stack vector. 
In addition, let the number of subgradient steps taken per iteration satisfy $\lim_{k \to \infty} \ubar{T}^{(k)} = \infty$. 
Then $\lim_{k\to \infty} J^{(k)}_{\appoptA}(\tilde{\psi}) = J_{\poptA}(\tilde{\psi})$ a.s.
\end{lemma}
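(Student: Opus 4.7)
The plan is to exploit the per-iteration rate bound established in Lemma~\ref{le:proj-stoch-convg-rate} for the inexact inner solver and then lift it to almost-sure convergence of the resolvent output. First I would observe that because $\tilde{\psi}$ is a fixed (deterministic) stack vector, $\norm{\tilde{\psi}}_\pspace$ is a fixed constant and the inner strongly convex subproblems that define the exact resolvent $J_{\poptA}(\tilde{\psi})$ are the same across all major iterations $k$. Thus the target $J_{\poptA}(\tilde{\psi})$ is deterministic, and stochasticity enters only through the fresh sample batches $\{\xi^{(k)}_{i,t}\}_{i,t}$ drawn by Subroutine~\ref{alg:proj-stoch-subgrad} at each iteration $k$.

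Next I would apply Lemma~\ref{le:proj-stoch-convg-rate} per player, and then aggregate over players and over the $\blambda,\bmu,\bz$ components using the same Lipschitz/linear-transformation bookkeeping as in the proof of Lemma~\ref{le:convg-rate-augvec}. This should yield constants $C_1,C_2>0$ depending only on the problem data and the chosen step sizes such that
\[
\expt{}{\norm{J^{(k)}_{\appoptA}(\tilde{\psi}) - J_{\poptA}(\tilde{\psi})}^2_\pspace} \;\leq\; \frac{C_1\norm{\tilde{\psi}}^2_\pspace + C_2}{\ubar{T}^{(k)}}.
\]
Since $\ubar{T}^{(k)}\to\infty$ and $\norm{\tilde{\psi}}_\pspace<\infty$, the right-hand side vanishes, delivering $L^{2}$ convergence and, via Markov's inequality, convergence in probability.

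To upgrade this to an almost-sure statement, I would invoke the first Borel--Cantelli lemma. For any fixed $\epsilon>0$,
\[
P\!\left(\norm{J^{(k)}_{\appoptA}(\tilde{\psi}) - J_{\poptA}(\tilde{\psi})}_\pspace>\epsilon\right) \;\leq\; \frac{C_1\norm{\tilde{\psi}}^2_\pspace + C_2}{\epsilon^{2}\,\ubar{T}^{(k)}}.
\]
Whenever $\sum_{k}1/\ubar{T}^{(k)}<\infty$, these probabilities are summable, so the exceptional event occurs only finitely often a.s.; intersecting over a countable $\epsilon\downarrow 0$ then yields the claimed a.s.\ limit. Thanks to the freshness of samples across $k$, the filtration argument from the proofs of Lemma~\ref{le:proj-stoch-convg-rate} and Lemma~\ref{le:convg-rate-augvec} can be inherited here, so no additional regularity is needed for the conditional bound to hold.

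The main obstacle is that the bare hypothesis $\lim_{k\to\infty}\ubar{T}^{(k)}=\infty$ does \emph{not} by itself imply $\sum_{k}1/\ubar{T}^{(k)}<\infty$ (e.g.\ $\ubar{T}^{(k)}=k$). Under only that hypothesis, the above rate argument rigorously delivers convergence in probability, from which a further subsequence converges a.s. To close the gap into true a.s.\ convergence along the full sequence, one would either strengthen the assumption to summability of $(1/\ubar{T}^{(k)})$ (which is consistent with the concrete choice $\ubar{T}^{(k)}=k^{b}$, $b>1$, admissible in Theorem~\ref{thm:summb}), or couple all $k$ to a \emph{single} Robbins--Monro SGD trajectory on the common subproblem: since the step sizes $\kappa_{i,t}=2\tau_{1i}/(t+2)$ satisfy $\sum_{t}\kappa_{i,t}=\infty$ and $\sum_{t}\kappa^{2}_{i,t}<\infty$, classical a.s.\ convergence of projected SGD would then give $y^{i(k+1)}_{i,\ubar{T}^{(k)}}\to y^{i(k+1)}_{i,*}$ a.s.\ whenever $\ubar{T}^{(k)}\to\infty$, which propagates through the linear updates of the dual and edge variables to give the stated almost-sure limit for $J^{(k)}_{\appoptA}(\tilde{\psi})$.
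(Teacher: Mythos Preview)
Your analysis is careful and you correctly isolate the logical gap in the Borel--Cantelli route: the bare hypothesis $\ubar{T}^{(k)}\to\infty$ does not yield summability of $1/\ubar{T}^{(k)}$, so that argument only gives convergence in probability (and a.s.\ along a subsequence). The paper does \emph{not} take this route.

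Instead, the paper follows precisely your second suggestion: it couples everything to a single projected stochastic subgradient trajectory on the fixed subproblem. Concretely, the paper first reduces to the $\{y^i_i\}$ components (the other entries of $J_{\appoptA}$ either coincide with those of $J_{\poptA}$ or are linear images of $\{y^i_i\}$), then derives the one-step recursion
\[
\expt{}{\norm{y^i_{i,t+1}-y^i_{i,*}}^2_2 \mid \bar{\mathcal{F}}_t}
\leq \Big(1-\tfrac{\kappa_{i,t}}{\tau_{1i}}\Big)\norm{y^i_{i,t}-y^i_{i,*}}^2_2
+\kappa_{i,t}^2\big(\alpha_{g,i}^2\norm{\tilde{\psi}}^2_2+\beta_{g,i}^2\big),
\]
with $\bar{\mathcal{F}}_t=\sigma\{\xi_{i,0},\ldots,\xi_{i,t-1}\}$, using nonexpansiveness of the projection, $\tfrac{1}{\tau_{1i}}$-strong convexity of $\hat{\mathbb{J}}_i$, and Assumption~\ref{asp:proj-stoch-subgrad}(ii). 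Since $\kappa_{i,t}=2\tau_{1i}/(t+2)$ satisfies $\sum_t\kappa_{i,t}=\infty$ and $\sum_t\kappa_{i,t}^2<\infty$, the paper invokes \cite[Lemma~2.2.10]{polyak1987introduction} (a Robbins--Monro/Robbins--Siegmund-type result) to conclude $\norm{y^i_{i,t}-y^i_{i,*}}^2_2\to 0$ a.s.; since $\ubar{T}^{(k)}\to\infty$, the claim follows. So your fallback is the paper's actual proof, and your Lemma~\ref{le:proj-stoch-convg-rate}-plus-Borel--Cantelli approach, while giving a cleaner quantitative bound, genuinely needs the stronger summability hypothesis you flagged.
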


\begin{proof}
From the explicit updating steps presented in Algorithm~\ref{alg:node-edge}, it is straightforward that, in the resulting vectors, the entries associated with the local estimates $\{y^{-i}_i\}$ keep the same for $J^{(k)}_{\appoptA}$ and $J_{\poptA}$, and the entries associated with the dual variables $\blambda$, $\bmu$, and $\bz$ are some linear transformations of $\tilde{\psi}$ and those associated with $\{y^i_i\}$. 
Hence, it suffices to prove $\lim_{t \to \infty} \norm{y^i_{i,t} - y^i_{i,*}}^2_2 = 0$ a.s. for all $i \in \playerN$. 
We let $g_{i,t}$ denote the scenario-based gradient evaluated at the point $y^i_{i,t}$, and $g^*_{i,t}$ the gradient corresponding to the expected-value augmented objective. 
Start by noting that the distance between $y^i_{i,t}$ the point obtained after $t$ minor iteration steps and $y^i_{i,*}$ the minimizer of the expected-valued augmented objective $\hat{\mathbb{J}}^{(k)}_i$ satisfies:
\begin{align*}
& \norm{y^i_{i,t+1} - y^i_{i,*}}^2_2 = \norm{\proj_{\mathcal{X}^B_i}(y^i_{i,t} - \kappa_{i,t} g_{i,t}) - \proj_{\mathcal{X}^B_i}(y^i_{i,*})}^2_2 \\
& \qquad \leq \norm{y^i_{i,t} - \kappa_{i,t} g_{i,t} - y^i_{i,*}}^2_2 \\
& \qquad = \norm{y^i_{i,t}- y^i_{i,*}}^2_2 - 2\kappa_{i,t}\langle y^i_{i,t}- y^i_{i,*}, g_{i,t}\rangle + (\kappa_{i,t})^2\norm{g_{i,t}}^2_2 \\
& \qquad = \norm{y^i_{i,t}- y^i_{i,*}}^2_2 - 2\kappa_{i,t}\langle y^i_{i,t}- y^i_{i,*}, g^*_{i,t}\rangle \\
& \qquad\qquad + 2\langle y^i_{i,t}- y^i_{i,*}, g^*_{i,t} - g_{i,t} \rangle + (\kappa_{i,t})^2\norm{g_{i,t}}^2_2. 
\end{align*}
We then construct the following $\sigma$-field:
\begin{align*}
    \bar{\mathcal{F}}_t \coloneqq \sigma\{\xi_{i,0}, \ldots, \xi_{i,t-1}\}. 
\end{align*}
Taking the conditional expectation $\expt{}{\cdot \mid \bar{\mathcal{F}}_t}$ on both side of the above inequality yields:
\begin{align*}
& \expt{}{\norm{y^i_{i,t+1} - y^i_{i,*}}^2_2 \mid \bar{\mathcal{F}}_t} \\
& \qquad \leq  (1-\frac{\kappa_{i,t}}{\tau_{1i}})\norm{y^{i}_{i,t} - y^{i}_{i,*}}^2_2 + (\kappa_{i,t})^2\expt{}{\norm{g_{i,t}}^2_2 \mid \bar{\mathcal{F}}_t} \\
& \qquad \leq  (1-\frac{\kappa_{i,t}}{\tau_{1i}})\norm{y^{i}_{i,t} - y^{i}_{i,*}}^2_2 
+ \kappa_{i,t}^2(\alpha^2_{g,i}\norm{\tilde{\psi}}^2_2 + \beta^2_{g,i})
\end{align*}
where the first inequality follows from the strong convexity of $\hat{\mathbb{J}}^{(k)}_i$, and the second is based on Assumption~\ref{asp:proj-stoch-subgrad}(ii). 
By leveraging \cite[Lemma~2.2.10]{polyak1987introduction}, we can conclude that $\lim_{t \to \infty} \norm{y^{i}_{i,t} - y^{i}_{i,*}}^2_2 = 0$ a.s., which completes the proof.  
\end{proof}

%%%%%%%%%%%%%%%%%%
% References
%%%%%%%%%%%%%%%%%%
% \nocite{*}
\bibliographystyle{IEEEtran}
\bibliography{IEEEabrv,references}

\end{document}